\tikzstyle arrowstyle=[scale=1]
\tikzstyle directed=[postaction={decorate,decoration={markings,mark=at position .65 with {\arrow[arrowstyle]{stealth}}}}]
\tikzstyle reverse directed=[postaction={decorate,decoration={markings,mark=at position .65 with {\arrowreversed[arrowstyle]{stealth};}}}]
\def\setminus{\smallsetminus}
\def\gA{{\mathfrak A}}
\def\gB{{\mathfrak B}}
\def\gC{{\mathfrak C}}
\def\gD{{\mathfrak D}}
\def\psl2r{{\rm PSL}(2,\RR)}
\def\f{{\varphi}}
\def\a{{\alpha}}
\def\b{{\beta}}
\def\l{{\lambda}}
\def\t{{\tau}}
\def\r{{\rho}}
\def\o{{\omega}}
\newtheorem{theorem}{Theorem}[section]
\newtheorem{lemma}[theorem]{Lemma}
\newtheorem{corollary}[theorem]{Corollary}
\newtheorem{proposition}[theorem]{Proposition}
\theoremstyle{definition} 
\newtheorem{remark}[theorem]{Remark}
\def\Diff{{\mathrm {Diff}}}
\def\RR{{\mathbb R}}
\def\A{{\mathcal A}}
\def\B{{\mathcal B}}
\def\C{{\mathcal C}} 
\def\D{{\mathcal D}}
\def\H{{\mathcal H}}
\def\I{{\mathcal I}}
\def\K{{\mathcal K}}
\def\M{{\mathcal M}}
\def\U{{\mathcal U}}
\def\L{\Lambda}
\newcommand{\be}{\begin{equation}} 
\newcommand{\ee}{\end{equation}}
\newcommand{\ba}{\begin{array}} 
\newcommand{\ea}{\end{array}}
\newcommand{\bea}{\begin{eqnarray}} 
\newcommand{\eea}{\end{eqnarray}}
\newcommand{\cref}[1]{Cor.~\ref{#1}}
\begin{document}

\title{\huge Non-Equilibrium Thermodynamics \\ and Conformal Field Theory}
 
\author{{\sc Stefan Hollands} \\
\small Institut f\"ur Theoretische Physik, Universit\"at Leipzig,
\\[-0.9mm]
\small Br\"uderstrasse 16, D-04103 Leipzig, Germany\\
[-0.9mm] \small {\tt stefan.hollands@uni-leipzig.de}\\
[2.2mm]{\sc Roberto Longo}\footnote{Supported in part by the ERC Advanced Grant 669240 QUEST ``Quantum Algebraic Structures and Models'', PRIN-MIUR, GNAMPA-INdAM and Alexander von Humboldt Foundation.}
 \\
\small Dipartimento di Matematica,
Universit\`a di Roma Tor Vergata,\\[-0.9mm]
\small Via della Ricerca Scientifica, 1, I-00133 Roma, Italy \\
[-0.9mm] \small {\tt longo@mat.uniroma2.it}
}

\maketitle

\vskip1cm

\begin{abstract}
We present a model independent, operator algebraic approach to non-equilibrium quantum thermodynamics within the framework of two-dimensional Conformal Field Theory. Two infinite reservoirs in equilibrium at their own temperatures and chemical potentials are put in contact through a defect line, possibly by inserting a probe. As time evolves, the composite system then approaches a non-equilibrium steady state that we describe. In particular, we re-obtain recent formulas of Bernard and Doyon~\cite{BD15}.
\end{abstract}

\newpage

\section{Introduction}
The purpose of non-equilibrium thermodynamics is to study physical systems that are not in thermodynamic equilibrium but can be basically described by thermal equilibrium variables. It thus deals with systems that are in some sense near equilibrium. 
Although the research on non-equilibrium thermodynamics has been effectively pursued for decades with important achievements, the general theory still missing. 
The framework is even more incomplete in the quantum case, non-equilibrium quantum statistical mechanics. 

Non-equilibrium thermodynamics deals with inhomogeneous systems. A typical model system is given by two infinite reservoirs, initially in equilibrium at different temperatures and different chemical potentials, set in contact at the boundary with an energy flux from one reservoir to the other; possibly the global system may incorporate a probe between the two reservoirs.

The purpose of this paper is to provide a general, model independent scheme for the above situation in the context of quantum, two dimensional Conformal Quantum Field Theory. As we shall see, we provide the general picture for the evolution towards a non-equilibrium steady state, and obtain in particular formulas derived in \cite{BD15}.

We use the Operator Algebraic description of Conformal Field Theory (see \cite{KL04}), in particular the phase boundary description of~\cite{BKLR,BKLR16,BKL}, and the study of the thermodynamical equilibrium states in \cite{CLTW1,CLTW2}. In this way we get a transparent picture of the system and its states as time evolves. Our setup is described schematically by the following spacetime-diagram fig.~\ref{DOC1}. 

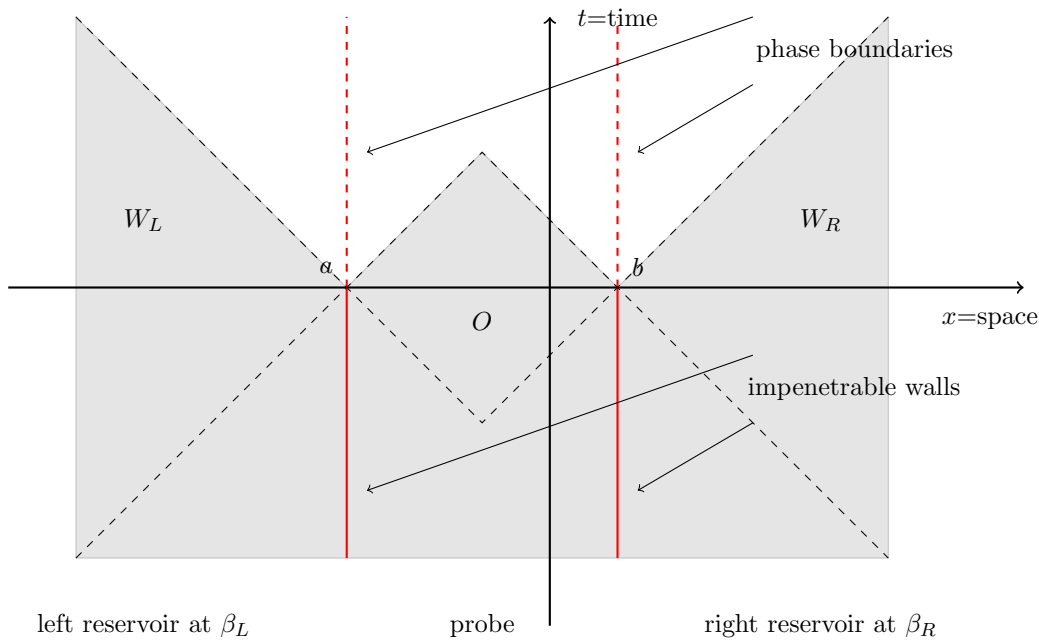
\begin{figure}[h] 
\begin{center}
\begin{tikzpicture}[scale=.9, transform shape]
\filldraw[fill=gray,opacity=.2,draw=black] (-7,0) -- (5,0) -- (5,8) -- (1,4) -- (-1,6) -- (-3,4)--(-7,8)--(-7,0);
\draw[dashed, draw=black] (-7,8) -- (-3,4) -- (-1,6) -- (1,4) -- (5,8);
\draw[dashed, draw=black] (-7,0) -- (-3,4) -- (-1,2) -- (1,4) -- (5,0);
\draw[red, thick, dashed] (1,4) -- (1,8);
\draw[red, thick, dashed] (-3,4) -- (-3,8);
\draw[red,thick] (-3,0) -- (-3,4);
\draw[red,thick] (1,0) -- (1,4);
\draw[->,black, thick] (0,-1) -- (0,8);
\draw[->,black, thick] (-8,4) -- (7,4);
\draw[->,black] (3,8) -- (-2.7,6);
\draw[->,black] (3,7) -- (1.3,6);
\draw[->,black] (3,3) -- (-2.7,1);
\draw[->,black] (3,2) -- (1.3,1);
\node at (4.5,7.5) {phase boundaries};
\node at (4.5,2.5) {impenetrable walls};
\node at (6.5,3.5) {$x$=space};
\node at (1,8) {$t$=time};
\node at (1.3,4.3) {$b$};
\node at (-3.3,4.3) {$a$};
\node at (4,5) {$W_R$};
\node at (-6,5) {$W_L$};
\node at (-1,3.5) {$O$};
\node at (-1,-1) {probe};
\node at (4,-1) {right reservoir at $\beta_R$};
\node at (-6,-1) {left reservoir at $\beta_L$};
\end{tikzpicture}
\end{center}
\caption{
\label{DOC1}
Spacetime diagram of our setup. The initial state $\psi$ is set up in the shaded region before the system is in causal contact with the phase boundaries. 
In the shaded regions to the left/right of the probe, we have a thermal equilibrium state at inverse temperatures $\beta_L/\beta_R$. In the diamond shaped 
shaded region $O$, we have an essentially 
arbitrary probe state. 
}
\end{figure}

Before $t=0$, we have a stationary probe situated in the $x$-interval $(a,b)$. In the interval to the left  $(-\infty,a)$, we have a reservoir characterized by a thermal equilibrium state at inverse temperature $\beta_L=1/T_L$, whereas in the interval $(b, \infty)$ to the right  we have a reservoir at another inverse temperature $\beta_R=1/T_R$. The probe is characterized by yet another, essentially arbitrary, state. These three subsystems are perfectly shielded from each other by impenetrable walls up to time $t=0$. At time $t=0$, we now replace the walls by transmissive phase boundaries. Since the propagation speed is finite in a relativistic quantum field theory, the states to the left and right 
continue to be described by equilibrium states inside the left and right wedges, $W_{L/R}$, and by the probe state inside the diamond $O$, i.e. outside the causal domains of dependence of the phase boundaries. The question is then how the state of the system is described towards the future of these regions, where the transparent nature of the boundaries after $t=0$ can be seen.

Our result is that this state approaches a non-equilibrium steady state, $\omega$, in the sense of \cite{Ru00} (see the main text for the precise definition of this notion): 
If $\psi$ is the initial state of the system (with impenetrable walls), $Z$ is any local observable of the system, and $\tau_t$ the time-translation automorphism, then $\lim_{t \to \infty} \psi(\tau_t(Z)) = \omega(Z)$.  The non-equilibrium steady state $\omega$ is determined uniquely by the temperatures characterizing the reservoirs (and the nature of the 
transmissive phase boundaries), and thus does {\em not} depend on 
the arbitrary state of the probe. In particular, the expectation value of the momentum density operator $\psi(T_{tx}(t,x))$ as $t \to \infty$ approaches 
the expectation value in the non-equilibrium steady state $\omega$, which in turn is given by a simple formula 
\be\label{asyT0}
\lim_{t\to+\infty} \psi(T_{tx}(t,x)) = \frac\pi{12}(c_L^{} \beta_L^{-2} - c_R^{} \beta_R^{-2}) \ ,
\ee
where $c_{L,R}$ are the central charges of the left/right moving sub-sectors of the theory. 

We also derive a similar result in case the reservoirs are characterized not just by a temperature, but also by a chemical potential. This requires the 
underlying conformal field theory to contain a current. Here we consider for simplicity a $U(1)$-current (see~\cite{BMT}), but the result could presumably be generalized to 
general current algebras corresponding to a some simple compact Lie algebra such as $\frak{su}(N)_k$. 

A setup similar to ours has previously been studied by \cite{BD15} (see also \cite{BD12,BD16}), where, in particular, asymptotic flux formulas such as \eqref{asyT0} were also obtained. These authors do not use the
mathematical formalism of operator algebras and conformal nets as we do, but instead work in the setup of vertex operator algebras. Furthermore, rather than working directly in the thermodynamic limit, they prefer to look at limits of finite systems described by density matrix states and ``scattering matrices'' of the underlying CFT.
The precise mathematical status of these constructions does not seem to be completely settled~\footnote{See e.g.
remarks 4.1 and 4.2 of \cite{BD15}.}, and, furthermore, the scope of our analysis seems to be broader in some respects, for instance in that we can allow for probe between the reservoirs, and probably also in the type of defects that we can handle. Nevertheless, the overall physical picture that emerges in their approach seems broadly compatible with ours.

The plan of this paper is as follows: We first 
provide a very concise summary on some background such as the ``geometric'' thermal equilibrium states, conformal nets, and phase boundaries in Sect.~2. Then we treat the situation with two reservoirs, one phase boundary, but with no probe nor chemical potential in Sect.~3.1.1. The latter is included after a brief summary of relevant results on the $U(1)$-current algebra in Sect.~3.1.2. The case with a probe is considered in Section~3.2. We end the paper with Sect.~4, where we demonstrate the relativistic KMS-condition for the geometric KMS-states, a technical result needed in the proofs in Sect.~3.

\section{Preliminaries}
We begin to recall some essential facts upon which our analysis will be based.

\subsection{Non-equilibrium  steady states, see \cite{Ru00}}
\label{Sect1}
As is well known, the thermal equilibrium states in Quantum Statistical Mechanics at infinite volume
are the KMS states \cite{HHW}.

Let $\gA$ be a $C^*$-algebra, $\tau=\{\tau_t\}_{t \in \mathbb{R}}$ a one-parameter group of automorphisms of $\gA$ and $\gA_0$ a dense $^*$-subalgebra of $\gA$.  A
state $\omega$ of $\gA$ is a positive, linear functional $\omega: \gA \to \mathbb{C}$ that is normalized, $\omega(1) = 1$.  A state $\omega$ is called ``KMS state for $\tau$ at inverse temperature $\b>0$'' if, for any $X,Y\in\gA_0$, there is a function $F_{XY}\in A(S_\b)$ such that 
\begin{itemize}
\item[$(a)$] $F_{XY}(t)= \omega\big(X\tau_t(Y)\big)$,
\item[$(b)$] $F_{XY}(t+i\b)=\omega\big(\tau_t(Y)X\big)$,
\end{itemize}
where $A(S_\b)$ is the algebra of functions analytic in the strip
$S_\b=\{0<\Im z <\b\}$, bounded and continuous on the closure $\bar
S_\b$.  Properties $(a)$
and $(b)$ then actually hold for all $X,Y\in\gA$. For a finite system, the automorphisms are 
implemented by a trace-class Hamiltonian $H$, i.e. $\tau_t(a) = e^{itH} a e^{-itH}$. Then 
the density matrix $\rho = e^{-\beta H}/Z$ defines a KMS state via
$\omega(a) = {\rm tr}(a \rho)$.  The notion of KMS-state generalizes the usual notion of Gibbs-state 
to infinite systems~\cite{HHW}, where $H$ is not of trace-class. 

Let us now consider a non-equilibrium statistical mechanics situation.
Suppose a quantum system $\Sigma$ is interacting with a set
of infinite reservoirs $R_{k}$ that are in equilibrium at different
temperatures $\b^{-1}_k$.  In this context, a natural class of stationary
non-equilibrium states occur, the {\it non-equilibrium steady states}, see 
\cite{Ru00}.  If we denote as above the observable $C^*$-algebra by $\gA$ and the time
evolution automorphism group by $\t$, by definition a non-equilibrium steady state
$\omega$ of $\gA$ satisfies property $(a)$ in the KMS condition, namely $F_{XY}(t)=\omega\big(X\tau_t(Y)\big)$ extends
holomorphically on a strip $S_\b$ (and continuosly on $0\leq \Im z < \b$) for any
$X,Y$ in a dense $^*$-subalgebra $\gA_0$ of $\gA$,
but property $(b)$ does not necessarily hold. Here $\b=\text{min}\b_k$.

A first example of non-KMS, non-equilibrium steady state
is provided by the tensor product of KMS
states at different temperatures; in this case the parameter $\b$ is
clearly the minimum of the inverse temperatures.  

\subsection{Two-dimensional conformal nets}
\label{Sect2}
We recall here some basic definitions and properties of a conformal net on the Minkowski plane, see \cite{CLTW1}.

Let $M$ be the two-dimensional Minkowski spacetime. 
A \emph{double cone} $O$ is a non-empty open subset of
$M$ of the form $O=I_+\times I_-$ with $I_{\pm}$
bounded open intervals of the light ray lines $\L_\pm\equiv\{ (t,x): t\pm x = 0\}$; we also set $u\equiv t + x$, $v\equiv t - x$ and denote by $\K$ the set of double cones of $M$.

The M\"{o}bius group $\psl2r$ acts on each compactified light ray line $\L_\pm\cup\{\infty\}$ by linear fractional transformations, hence we  have a local (product) action of
$\psl2r\times \psl2r$ on $M$.

A \emph{local M\"{o}bius covariant net} $\B$ on $M$ is a map
\[
\B:O\in\K\mapsto\B(O)
\]
where the $\B(O)$'s are von Neumann algebras on a fixed Hilbert space $\H$, with the following properties:
\begin{itemize}

\item \emph{Isotony.} $O_1\subset O_2\implies
\B(O_1)\subset\B(O_2)$.

\item \emph{M\"{o}bius covariance.} There exists a unitary representation
$U$ of $\overline{\psl2r}\times\overline{\psl2r}$
on $\H$ such that, for every double cone $O\in\K$,
\[
U(g)\B(O)U(g)^{-1} = \B(gO),\quad g\in\U,
\]
with $\U\subset\overline{\psl2r}\times\overline{\psl2r}$
any connected neighborhood of the identity
such that $gO\subset M$ for all $g\in\U$. Here $\overline{\psl2r}$
denotes the universal cover of $\psl2r$.

\item \emph{Vacuum vector.} There exists a unit $U$-invariant vector
$\Omega$, cyclic for $\bigcup_{O\in\K}\B(O)$.

\item \emph{Positive energy.} The one-parameter unitary subgroup of $\{U(t)\}_{t \in \mathbb{R}}$
corresponding to time translations ${\bf x} \mapsto {\bf x} + t{\bf e}$ with ${\bf e} \in M$ any future pointing timelike vector, 
has positive generator.

\end{itemize}

$\bullet$ $\B$ is \emph{local} if $\B(O_1)$ and $\B(O_2)$ commute element-wise if $O_1$ and $O_2$ are spacelike separated.
\smallskip

\noindent
A \emph{local conformal net} $\B$ on $M$ is a local M\"{o}bius covariant
net $\B$ such that the unitary representation $U$ extends to a
projective unitary representation of the group of global, orientation preserving conformal diffeomorphisms of the Einstein cylinder (a time cover of the 2-torus compactification of $M$, see \cite{BGL}). In particular
\be\label{diff}
U(g)\B(O)U(g)^{-1} = \B(gO)\ ,\quad g\in\Diff(\RR)\times\Diff(\RR)\ ,
\ee
if $O\in\K$. We further assume that
\begin{equation}\label{loc}
U(g)XU(g)^{-1} = X\ ,\quad g\in\Diff(\RR)\times\Diff(\RR)\ ,
\end{equation}
if $X\in\B(O)$, $g\in\Diff(\RR)\times\Diff(\RR)$ and $g$ acts
identically on $O$. 

Given a local M\"obius covariant net $\B$ on $M$ and a bounded interval
$I$ of the chiral line $\L_+$ we set
\begin{equation}\label{movers}
\A_{+}(I)\equiv \bigcap_{O=I\times J}\B(O)
\end{equation}
(intersection over all intervals $J$ of the chiral line $\L_-$), and we analogously
define $\A_-$. By identifying the light-ray lines $\L_{\pm}$ with $\RR$ we then get two
local nets $\A_{\pm}$ on $\RR$, \emph{the chiral components of $\B$}.
They extend to local M\"obius covariant nets on $S^1$. The Hilbert space
$\H_{\pm}\equiv\overline{\A_{\pm}(I)\Omega}$
is independent of the interval $I\subset\L_{\pm}$ and $\A_{\pm}$
restricts to a (cyclic) M\"obius covariant conformal net on $\H_{\pm}$. 
Moreover $\A_\pm$ contains the diffeomorphism symmetries, i.e. the Virasoro subnet.

Let us assume $\B$ to  be a local conformal net on $M$.
Set 
\[
\A(O)\equiv \A_+(I_+)\vee\A_-(I_-)\ , \quad O=I_+\times I_- \ , 
\]
with $\A_\pm$ given by \eqref{movers} (or, more generally, a subnet of $\A_\pm$ containing the Virasoro subnet).
Then $\A$ is a local conformal, irreducible subnet of
$\B$. Both $\A(O)$ and $\B(O)$ are factors for any $O\in\K$, and
there exists a consistent family of vacuum preserving
conditional expectations $\varepsilon_{O}:\B(O)\to\A(O)$ and $\A_+(I_+)\vee\A_-(I_-)$ is naturally isomorphic to the tensor product
$\A_+(I_+)\otimes\A_-(I_-)$, $O= I_+\times I_-$.

We shall say that $\B$ is \emph{completely rational}
if the two associated chiral nets $\A_\pm$ in \eqref{movers} are completely rational \cite{KLM}.
If $\B$ is completely rational, the following facts hold in particular \cite{KLM,KL04}:
\begin{itemize}
\item The inclusion of factors $\A(O)\subset\B(O)$ has finite Jones index, $O\in\K$;
\item Both $\A$ and $\B$ have finitely many irreducible sectors and all of them have finite dimension. Every irreducible sector $\r$ of $\A$ has the form $\r =\r_+\otimes\r_-$ with $\r_\pm$ an irreducible sector of $\A_\pm$.
\end{itemize}
Let $E\subset M$ be an open region of the spacetime $M$. We denote by $\K(E)$ the set of double cones $O$ of $M$ with closure $\bar O\subset E$. Thus $\K(M)=\K$ in particular.

With $\A$ a net of von Neumann algebras on $M$, we shall consider the $C^*$-algebra $\gA(E)$ generated by the von Neumann algebras $\A(O)$ with $O\in\K(E)$, and also set $\gA\equiv\gA(M)$. Similarly, $\gB(E)$ denotes the $C^*$-algebra associated with $E$ by the net $\B$, and so on (in other words we use a calligraphic letter to denote a net of von Neumann algebras and the corresponding gothic letter for the $C^*$-algebra associated by the net to a region). 

\subsubsection{KMS states in CFT, see \cite{CLTW1,CLTW2}}

We now recall the definition of the {\it geometric KMS state} \cite{CLTW1}. Let $\C$ be a local conformal net on $\mathbb R$. The exponential map gives an isomorphism of $\gC(\mathbb R)$ with $\gC(\mathbb R_+)$.
The vacuum state $\omega$ restricts to a KMS state $\omega|_{\gC(\mathbb R_+)}$ w.r.t. dilations at inverse temperature $2\pi$ (this statement is a one-dimensional analogue 
of the Bisognano-Wichmann theorem~\cite{BW}). 
The geometric KMS state (at inverse temperature $2\pi$) on $\C$ is the state $\f$ on $\gC(\mathbb R)$ obtained as the pullback of $\omega |_{\gC(\mathbb R_+)}$ by the exponential map. Clearly $\f$ is a KMS state on $\gC(\mathbb R)$ w.r.t. translations at inverse temperature $2\pi$. 

We further obtain a  KMS state w.r.t. $\tau$ at any given inverse temperature $\b > 0$ by $\f_\b = \f\cdot \delta_\lambda$, where $\delta_\lambda$ is the dilation automorphism of $\gC(\mathbb R)$ by $\lambda = \b/2\pi$; we call this state the ``geometric KMS state of $\C$ at inverse temperature $\b$.'' For convenience, in the following the geometric KMS state means the geometric KMS state at inverse temperature $\b = 1$, unless we specify a different temperature.

If $\B$ is a conformal net on $M$, the geometric KMS state w.r.t. time translations is similarly constructed. If $\A = \A_+\otimes \A_-$ is the subnet generated by the chiral subnets we have $\f = \f^+\otimes\f^-\cdot\varepsilon$, where $\f^\pm$ is the geometric KMS state on $\A_\pm$ and $\varepsilon : \gB\to \gA$ is the natural conditional expectation (i.e. $\varepsilon|_{\B(O)} = \varepsilon_O$).

A basic result for KMS states with respect to translations is the following.
\begin{theorem}{\rm \cite{CLTW1}.}\label{gstate}
If $\B$ is a completely rational local conformal net on $M$, there exists a unique $\b$-KMS state $\f_\b$ of $\mathfrak B$ w.r.t. the time translation group $\tau$ at any given inverse temperature $\b>0$. The state $\f_\b$ is the geometric $\b$-KMS state of $\gB$.
\end{theorem}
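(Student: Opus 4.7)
My plan is to prove existence and uniqueness separately.

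\smallskip\noindent\textbf{Existence.} The geometric KMS state has already been described in the preceding paragraph: set $\f_\b = (\f^+_\b\otimes\f^-_\b)\cdot\varepsilon$, where $\f^\pm_\b$ is obtained on each chiral component $\A_\pm$ by pulling back the vacuum state (restricted to the net on $\RR_+$) via the exponential map and then rescaling by the dilation $\delta_{\b/2\pi}$. The one-dimensional Bisognano--Wichmann theorem identifies the modular group of the vacuum restricted to $\gA_\pm(\RR_+)$ with dilations at parameter $2\pi$, so $\f^\pm_\b$ is a $\b$-KMS state for translations on each chiral net. Since time translation on $M$ acts as simultaneous translation on $\L_+$ and $\L_-$, the product $\f^+_\b\otimes\f^-_\b$ is $\b$-KMS on $\gA=\gA_+\otimes\gA_-$ for $\tau$. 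Because the family of conditional expectations $\varepsilon_O:\B(O)\to\A(O)$ is vacuum-preserving and commutes with $\tau$, composing with the global $\varepsilon:\gB\to\gA$ preserves the KMS property and yields the desired $\b$-KMS state of $\gB$.

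\smallskip\noindent\textbf{Uniqueness, chiral step.} Let $\psi$ be any $\b$-KMS state of $\gB$ for $\tau$ and consider its restriction $\psi_0 = \psi|_\gA$ to the chiral subalgebra. Since the action on $\gA$ is the diagonal translation $\alpha_t\otimes\alpha_t$, $\psi_0$ is $\b$-KMS for this one-parameter group. I would first argue that $\psi_0$ factors as a tensor product $\psi_+\otimes\psi_-$ of $\b$-KMS states on $\gA_\pm$: since $\A_\pm(I_\pm)$ are factors and $\A_+(I_+)\vee\A_-(I_-)\cong\A_+(I_+)\otimes\A_-(I_-)$, the restriction to any local algebra is a product state by the structure of normal states on tensor products of factors, and consistency across $\K$ upgrades this to a global factorization. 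Next, I would invoke the completely rational, chiral uniqueness result of \cite{CLTW1}: on each $\A_\pm$, the $\b$-KMS state for translations is unique, equal to $\f^\pm_\b$. The key inputs here are complete rationality (finitely many finite-dimensional sectors, split property, strong additivity), which via the Kac--Wakimoto-type analysis and the dispersion/clustering of translation-KMS states force $\psi_\pm=\f^\pm_\b$. Thus $\psi|_\gA = \f^+_\b\otimes\f^-_\b$.

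\smallskip\noindent\textbf{Uniqueness, extension step.} It remains to show that a $\b$-KMS state of $\gB$ is determined by its restriction to $\gA$. Here the fact that $\A(O)\subset\B(O)$ has finite Jones index, together with the existence of the consistent vacuum-preserving conditional expectation $\varepsilon$, is decisive. One can show that any normal state $\psi$ of $\B(O)$ whose restriction to $\A(O)$ equals $(\f^+_\b\otimes\f^-_\b)|_{\A(O)}$ and whose modular group is that of the geometric state must coincide with $\f_\b$ on $\B(O)$: indeed $\varepsilon$ intertwines the modular automorphism groups of $\f_\b$ and $\f_\b\circ\varepsilon$, and uniqueness of the $\varepsilon$-preserving KMS extension through a finite-index inclusion (Connes--Takesaki / Longo's results on conditional expectations and KMS extensions) finishes the argument. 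Passing to the inductive limit over $O\in\K$ yields $\psi=\f_\b$ on all of $\gB$.

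\smallskip\noindent\textbf{Main obstacle.} The delicate point is the chiral uniqueness statement, which is genuinely a theorem of \cite{CLTW1} rather than a formal consequence of the setup: it requires complete rationality in an essential way to rule out exotic translation-KMS states on $\A_\pm$. The tensor-factorization of $\psi|_\gA$ and the extension uniqueness are more structural and, given factoriality and finite index, should follow from standard modular-theoretic arguments.
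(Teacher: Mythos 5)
This theorem is not proved in the paper at all: it is imported verbatim from \cite{CLTW1}, so there is no internal proof to compare against, and your attempt has to be judged against the actual argument in that reference. Your overall architecture (geometric construction for existence; restriction to $\gA=\gA_+\otimes\gA_-$, chiral uniqueness, and control of the extension through the finite-index inclusion for uniqueness) is indeed the right skeleton. But two of the load-bearing steps are wrong as written, and a third is circular.

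First, in the existence part, the inference ``$\varepsilon$ is vacuum-preserving and commutes with $\tau$, hence $(\f^+_\b\otimes\f^-_\b)\circ\varepsilon$ is $\tau$-KMS on $\gB$'' is not valid. Composing a KMS state of a subalgebra with an equivariant conditional expectation does not in general produce a KMS state of the extension --- this failure is precisely the origin of chemical potentials in the sense of \cite{AHKT}, recalled in Sect.~2.2.2, and the extension $\omega\circ\varepsilon$ is KMS only for a possibly gauge-twisted dynamics. That the \emph{geometric} state does extend to a KMS state requires the identification of the translation covariance cocycles of the sectors appearing in the dual canonical endomorphism with Connes Radon--Nikodym cocycles (equivalently, vanishing of the chemical potentials for the geometric state, \cite{L97}); the proof of Theorem~\ref{relKMS} in Sect.~4, with its decomposition $X=\sum_i a_iR_i$ and the cocycle identity \eqref{Cco}, shows exactly how much work this step costs. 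Second, in the chiral step, the claim that $\psi|_{\A_+(I_+)\vee\A_-(I_-)}$ is a product state ``by the structure of normal states on tensor products of factors'' is false: normal states on a tensor product of factors are generically entangled. The correct route (Prop.~\ref{genKMSstate}, i.e.\ \cite{CLTW1}) uses extremality of $\psi$, Takesaki's theorem to produce $\psi$-preserving expectations onto the modular-invariant subalgebras $\pi_\psi(\gA_\pm)''$, and factoriality to conclude $\psi(xy)=\psi(x)\psi(y)$. Third, your extension step assumes that the competing KMS state is $\varepsilon$-invariant (``whose modular group is that of the geometric state''); but that an arbitrary $\tau$-KMS state of $\gB$ satisfies $\psi=\psi\circ\varepsilon$ is essentially the uniqueness assertion itself, and ruling out the other extensions of $\f^+_\b\otimes\f^-_\b$ is again the chemical-potential analysis, not a formal consequence of finite index. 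The chiral uniqueness input you isolate as the ``main obstacle'' is indeed a genuine theorem of \cite{CLTW1}, but the extension step is no less delicate.
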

By construction, the geometric KMS state is locally normal. Notice however that, for any conformal net with the split property, every KMS state w.r.t. translations is in fact locally normal (see \cite{CLTW1,CLTW2} and references therein).

The above theorem holds true, with the same proof, also if $\B$ is non local, but relatively local w.r.t. the completely rational, chiral subnet
 $\A=\A_+\otimes\A_-$.
In general, the following holds (see \cite{CLTW1}).
\begin{proposition}\label{genKMSstate}
Let $\B\supset \A=\A_+\otimes\A_-$ be a local conformal net on $M$ and $\varphi$ an extremal $\b$-KMS state of $\mathfrak B$ w.r.t. the time translation group $\tau$. Then 
$\varphi$ is locally normal  and $\f|_\gA = \f^+\otimes\f^-$, where $\f^\pm$ is an extremal $\b$-KMS state on $\gA_\pm$.
\end{proposition}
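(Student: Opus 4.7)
I pass to the GNS representation of $\varphi$ and exploit the chiral commuting-subalgebra structure together with Takesaki's conditional-expectation theorem to establish the factorization $\varphi|_\gA = \varphi^+ \otimes \varphi^-$; local normality will then follow from that of chiral KMS states combined with the vacuum-preserving conditional expectation. Since $\gA = \gA_+ \otimes \gA_-$ is globally $\tau$-invariant with $\tau_t$ restricting to the product flow $\tau_t^+ \otimes \tau_t^-$ of chiral translations, the restriction $\varphi|_\gA$ is automatically $\beta$-KMS for the product flow. In the GNS data $(\H_\varphi, \pi_\varphi, \Omega_\varphi)$, put $M = \pi_\varphi(\gB)''$, $M_A = \pi_\varphi(\gA)''$, and $M_\pm = \pi_\varphi(\gA_\pm)''$. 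Extremality of $\varphi$ makes $M$ a factor, and since the modular group $\sigma^\varphi$ is a rescaling of $\tau$, the subalgebras $M_A$ and $M_\pm$ are $\sigma^\varphi$-invariant; Takesaki's theorem therefore produces normal $\varphi$-preserving conditional expectations $\varepsilon \colon M \to M_A$ and $E_\pm \colon M_A \to M_\pm$, while chiral locality gives $M_\pm$ commuting with $M_+ \vee M_- = M_A$.

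The factorization then reduces to showing that $M_A$ is itself a factor (equivalently, $\varphi|_\gA$ is extremal), after which a short centralizer computation finishes the job. For $m_\pm \in M_\pm$, the bimodule property of $E_-$ together with commutativity yields
\[
m_- E_-(m_+) = E_-(m_- m_+) = E_-(m_+ m_-) = E_-(m_+) m_-,
\]
so $E_-(M_+) \subset Z(M_-)$. Conversely, any $c \in Z(M_-)$ lies in $M_- \cap M_-'$ and hence commutes with $M_-' \supset M_+$ as well as with $M_-$, so with $M_+ \vee M_- = M_A$; thus $Z(M_-) \subset Z(M_A) = \CC$, so $M_-$ is a factor, $\varphi^- := \varphi|_{\gA_-}$ is an extremal $\beta$-KMS state, and $E_-(m_+) = \varphi^+(m_+)\,\mathbf{1}$. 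Consequently
\[
\varphi(m_+ m_-) = \varphi(E_-(m_+ m_-)) = \varphi(E_-(m_+) m_-) = \varphi^+(m_+)\,\varphi^-(m_-),
\]
which extends by density to $\varphi|_\gA = \varphi^+ \otimes \varphi^-$; the symmetric statement gives extremality of $\varphi^+$.

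Local normality of $\varphi$ then follows because each $\varphi^\pm$ is an extremal KMS state on a chiral M\"obius covariant net and is therefore locally normal by the structure theory of chiral conformal nets (cf.~\cite{CLTW1}); hence $\varphi|_\gA$ is locally normal on each $\A(O) = \A_+(I_+) \otimes \A_-(I_-)$, and composition with the normal vacuum-preserving conditional expectation $\varepsilon_O \colon \B(O) \to \A(O)$, via $\varphi|_{\B(O)} = \varphi|_{\A(O)} \circ \varepsilon_O$, yields local normality on $\gB$. The main obstacle is the transfer of extremality from $\varphi$ on $\gB$ to $\varphi|_\gA$ on $\gA$: while $\varepsilon \colon M \to M_A$ is furnished by Takesaki, showing that any $\beta$-KMS decomposition of $\varphi|_\gA$ lifts through $\varepsilon$ to a $\beta$-KMS decomposition of $\varphi$ requires $\omega \circ \varepsilon$ to be $\beta$-KMS on $\gB$ whenever $\omega$ is on $\gA$, which in turn relies on the detailed compatibility of $\varepsilon$ with the time-translation flow specific to the conformal-net setting~\cite{CLTW1}. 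After that, the remainder is Takesaki theory plus elementary centralizer manipulations.
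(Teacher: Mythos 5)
The paper does not actually prove this proposition --- it is quoted from \cite{CLTW1} --- so your argument has to stand on its own. The architecture you set up is the natural one and large parts of it are correct: the GNS representation, the Takesaki expectations $\varepsilon\colon M\to M_A$ and $E_\pm\colon M_A\to M_\pm$ onto the $\sigma^\varphi$-invariant chiral subalgebras, the computation $E_-(M_+)\subset Z(M_-)\subset Z(M_A)$, and the resulting product formula $\varphi(m_+m_-)=\varphi^+(m_+)\varphi^-(m_-)$. But everything hinges on $Z(M_A)=\CC$, and this is exactly the step you do not prove. Worse, the repair you sketch is circular: a normal KMS state $\omega$ of $\gA$ occurring in a decomposition of $\varphi|_\gA$ has the form $\omega=\varphi(h\,\cdot)$ with $h\in Z(M_A)$ positive; the bimodule property gives $\omega\circ\varepsilon=\varphi(h\,\cdot)$ on all of $M$; and since $h$ lies in the centralizer of $\varphi$, the Connes cocycle of $\varphi(h\,\cdot)$ relative to $\varphi$ is $h^{it}$, so $\omega\circ\varepsilon$ satisfies the $\beta$-KMS condition on $\gB$ \emph{if and only if} $h\in Z(M)$ --- which is precisely the inclusion $Z(M_A)\subset Z(M)$ you are trying to establish. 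The general principle that a flow-commuting conditional expectation pushes KMS states forward to KMS states is false; its failure is the origin of the chemical potential in the Araki--Haag--Kastler--Takesaki analysis recalled in Sect.~\ref{ChemPot}. Extremality of $\varphi$ only gives factoriality of $M$, not of the subalgebra $M_A$, so some genuinely geometric input (locality of $\B$ relative to the chiral nets, asymptotic abelianness of the chiral translations, or the split property, as exploited in \cite{CLTW1}) is needed here; this is the real mathematical content of the proposition, and deferring it back to \cite{CLTW1} defeats the purpose.

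The local-normality part has a second gap of the same flavour: you invoke the identity $\varphi|_{\B(O)}=\varphi|_{\A(O)}\circ\varepsilon_O$ with $\varepsilon_O$ the vacuum-preserving expectation, but this asserts that $\varphi$ is $\varepsilon$-invariant, which is not automatic for an arbitrary extremal KMS state of $\gB$. The paper itself treats $\varepsilon$-invariance as an additional hypothesis in the Remark closing Sect.~4, and proves $\omega\circ\varepsilon=\omega$ for the \emph{geometric} state only, by a special construction (see the proof of Lemma~\ref{V}). So local normality cannot be obtained this way without first proving $\varphi=\varphi\circ\varepsilon$; in \cite{CLTW1} it is instead derived from the split property. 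The centralizer manipulations and the concluding density argument are fine as far as they go, but the two steps above are genuine gaps.
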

Also Proposition \ref{genKMSstate} holds true if the irreducible extension of $\B$ of $\A$ is non-local but relatively local with respect to $\A$.
\subsubsection{Chemical potential, see \cite{AHKT,L01}}
\label{ChemPot}
Let $\A$ be a local conformal net on $M$ (or on $\mathbb R$) and $\f$ an extremal $\b$-KMS state on $\gA$ w.r.t. the time translation group $\t$. The von Neumann algebra $\M\equiv\pi_\f(\gA)''$ in the GNS representation $\pi_{\f}$ is a factor.

Let $\r$ be an irreducible DHR localized endomorphism of $\A$ with finite index, namely the dimension $d(\rho)$ of $\rho$ is finite \cite{L89}. We assume that $\r$ is normal, namely it extends to a normal endomorphism of $\M$; this automatically holds under general assumptions, for example if $\f$ satisfies essential duality, i.e. $\pi_\f\big(\gA(W_{L/R})\big)'\cap\M = \pi_\f\big((\gA(W_{R/L})\big)''$.

Let $U$ be the time translation unitary covariance cocycle in $\gA$ for the endomorphism $\r$ defined by
\[
{\rm Ad}U(t)\cdot\t_t \cdot \r  = \r \cdot \t_t \ , \quad t\in \mathbb R \ , 
\]
with $U(t+s) = U(t) \tau_t\big(U(s)\big)$ (cocycle relation). 
The choice of the $U$ is unique up to a phase and unique if we assume, as we will do from now, that $U$ is the restriction of the M\"obius covariance unitary cocycle (see \cite{L01}).

Denote by $\Phi_\r$ the left inverse of $\r$ on $\M$; then $U$ is equal up to a phase to a Connes Radon-Nikodym cocycle \cite{Con1}, namely there exists $\mu_\r(\f)\in\mathbb R$ such that
\be\label{chempot}
U(t) = e^{-i2\pi\mu_\r(\f)t}d(\r)^{-i\b^{-1}t}\big(D\f\cdot\Phi_\r : D\f\big)_{- \b^{-1}t} \ .
\ee
$\mu_\r(\f)$ is the {\it chemical potential} of $\f$ w.r.t. the charge $\r$.

The geometric $\b$-KMS state $\f_0$ has zero chemical potential \cite{L97}. 
By the holomorphic property of the Connes Radon-Nikodym cocycle \cite{Con1}, we then have \cite{L01}:
\be\label{fc}
e^{2\pi\b\mu_\r(\f)} =  {\underset{t\, \longrightarrow\,  i\b}{\rm anal.\, cont.\,}\f\big(U(t)\big)}\big/
{\underset{t\, \longrightarrow\,  i\b}{\rm anal.\, cont.\,}\f_0\big(U(t)\big)} \ ,
\ee
which in fact holds for any choice of the phase for the unitary time covariance cocycle $U$.
\subsection{Phase boundaries, see \cite{BKLR16}}
\label{phaseb}

Let $M_L \equiv \{(t,x): x <0 \}$ and $M_R \equiv \{(t,x): x > 0\}$ be the left and the right half Minkowski plane.

A  (transmissive) {\it phase boundary} is given by specifying two local conformal nets
$\B^L$ and $\B^R$ on $M$, covariantly represented on the same Hilbert space $\H$;
$\B^L$ and $\B^R$ both contain a common chiral subnet $\A=\A_+\otimes\A_-$.
Initially $\B^{L/R}$ is defined on $M_{L/R}$
\[
\K(M_L)\ni O\mapsto \B^L(O)\ ; \qquad \K(M_R)\ni O \mapsto \B^R(O)\ ,
\]
yet $\B^{L/R}$ extends on the entire $M$ by covariance.
Indeed, the chiral nets $\A_\pm$ on $\mathbb R$ contain
the unitaries implementing the local diffeomorphisms, and hence both nets $\B^L$ and $\B^R$ share the same unitary representation of the symmetry group $\Diff(\mathbb R)\times\Diff(\mathbb R)$.

Causality requires that the algebras  
$\B^L(O_1)$ and $\B^R(O_2)$ commute whenever $O_1\subset M_L$
and $O_2\subset M_R$ are spacelike separated. By diffeomorphism 
covariance, $\B^R$ is thus right local with respect to $\B^L$, i.e. if $O_1$ is spacelike to $O_2$ {\em and}
$O_2$ is to the left of $O_R$, then we have $[\B^L(O_2), \B^R(O_1)] = 0$.  

Given a phase boundary, we consider the von Neumann algebras generated by $\B^L(O)$ and $\B^R(O)$:
\be
\label{eq:add}
\D(O)\equiv\B^L(O)\vee\B^R(O)\ , \quad O\in\K\ .
\ee
$\D$ is another extension of $\A$, but $\D$ is in general
non-local, but relatively local w.r.t. $\A$. $\D(O)$ may have non-trivial center. In the completely rational case, $\A(O)\subset \D(O)$ has finite Jones index, so the center of $\D(O)$ is finite dimensional; by standard arguments, we may cut down the center to $\mathbb{C}$ by a minimal projection of the center, and we may then assume $\D(O)$ to be a factor, as we will do for simplicity in the following.

\section{Non-equilibrium states in CFT}

We now consider a non-equilibrium quantum thermodynamical system described within conformal field theory and discuss the approach to a non-equilibrium steady state as the system evolves in time.
\subsection{Case with no chemical potential and no probe}\label{main}

Let us consider two local conformal nets $\B^L$ and $\B^R$ on the Minkowski plane $M$ and both containing the same chiral net 
$\A=\A_+\otimes\A_-$. In this section $\B^{L/R}$ is completely rational, and we use the uniqueness of the geometric KMS state (Thm. \ref{gstate}), later we get on the case where chemical potentials are present.
\medskip

\noindent
{\bf Before contact.} We assume that the two systems $\B^L$ and $\B^R$ are, separately, each in a thermal equilibrium state, possibly at different temperatures. Namely we consider the KMS state $\f^{L/R}_{\b_{L/R}}$ on $\gB^{L/R}$ at inverse temperature $\b_{L/R}$ with respect to the translation automorphism group $\tau$, possibly with $\b_L \neq \b_R$. 

At the moment, the two systems $\B^L$ and $\B^R$ live independently in their own half plane $M_L$ and $M_R$ and their own Hilbert space.
The composite system is described by the net on $M_L\cup M_R$ given by
\be\label{B}
\K(M_L)\ni O\mapsto \B^L(O) \, , \qquad   \K(M_R)\ni  O\mapsto \B^R(O) \, .
\ee
The $C^*$-algebra of the composite system is 
$
\gB^L(M_L)\otimes\gB^R(M_R)
$
and the state of the system is
\[
\f = \f^L_{\b_L}|_{\gB^{L}(M_{L})} \otimes \f^R_{\b_R}|_{\gB^{R}(M_{R})} \ ;
\]
$\f$ is a stationary state, a non equilibrium steady state, but not a KMS state.

We will denote by $V_\pm=\{(t,x) : \pm t > |x|\}$ the forward/past {\it light cone} and by $W_{L/R}=\{(t,x): \mp x >|t|\}$ the left/right {\it wedge} in the two-dimensional Minkowski space $M$.
\medskip

\noindent
{\bf After contact.} 
At time $t=0$ we put the two systems $\B^L$ on $M_L$ and $\B^R$ on $M_R$ in contact through a totally transmissible phase boundary and the time-axis the defect line. We are in the framework of Sect. \ref{phaseb}, with $\B^L$ and $\B^R$ now nets on $M$ acting on a common Hilbert space $\H$. With  $O_1\subset M_L$, $O_2\subset M_R$ double cones, the von Neumann algebras $\B^L(O_1)$ and $\B^R(O_2)$ commute if $O_1$ and $O_2$ are spacelike separated, so $\gB^L(W_L)$ and $\gB^R(W_R)$ commute. 

We want to describe the state $\psi$ of the global system after time $t=0$. As above, we set
\[
\D(O) \equiv \B^L(O)\vee\B^R(O)\ , \quad O\in\K\ .
\]
The origin ${\bf 0}$ is the only $t=0$ point of the defect line; the observables localized in the causal complement $W_L\cup W_R$ of the $\bf 0$ thus do not feel the effect of the contact, so $\psi$ should be a natural state on $\gD$ that satisfies
\be\label{psiW}
\psi |_{\gB^L(W_L)} = \f^L_{\b_L} |_{\gB^L(W_L)}, \quad \psi |_{\gB^R(W_R)} = \f^R_{\b_R} |_{\gB^R(W_R)}\ .
\ee
In G, $\psi$ is to be a {\it local thermal equilibrium state} on $W_{L/R}$ in the sense of \cite{BOR}.

Since $\gB^L(M_L)$ and $\gB^R(M_R)$ are not independent, the existence of such state $\psi$ is not obvious. Clearly the $C^*$-algebra on $\H$ generated by $\gB^L(W_L)$ and $\gB^R(W_R)$ is naturally isomorphic to $\gB^L(W_L)\otimes\gB^R(W_R)$ 
($\gB^L(W_L)''$ and $\gB^R(W_R)''$ are commuting factors) and the restriction of  
$\psi$ to it is the product state $\f^L_{\b_L}|_{\gB^{L}(W_L)} \otimes \f^R_{\b_R}|_{\gB^{R}(W_R)}$.
\begin{lemma}\label{autom}
Let $\C$ be a (possibly non local) conformal net on $\mathbb R$. Given $\lambda_- , \lambda_+ >0$, there exists an automorphism $\a\equiv \a_{\lambda_-,\lambda_+}$ of the $C^*$-algebra $\gC(\mathbb R\setminus\{0\})$ such that 
\[
\a|_{\gC(-\infty,0)} = \delta_{\lambda_-}\ , \quad \a|_{\gC(0, \infty)} = \delta_{\lambda_+}\ ,
\]
where $\delta_\lambda$ is the $\lambda$-dilation automorphism of $\gC(\mathbb R)$.
\end{lemma}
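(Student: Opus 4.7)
The plan is to build $\alpha$ using the diffeomorphism covariance of $\C$. No single smooth diffeomorphism of $\RR$ can globally realize the ``piecewise dilation'' $x\mapsto \lambda_\pm x$ when $\lambda_-\neq\lambda_+$, since the two dilations disagree at the origin. However, this obstruction is harmless on any local algebra $\gC(I)$ with $\bar I\subset\RR\setminus\{0\}$, because each such $I$ is bounded away from $0$ and the singular behaviour at the origin is invisible to it.

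Concretely, for any finite collection of bounded intervals $I_1,\dots,I_n$ with $\bar I_k\subset\RR\setminus\{0\}$, I choose $\epsilon>0$ small enough that all the $I_k$ avoid $[-\epsilon,\epsilon]$, and I pick a diffeomorphism $\gamma\in\Diff(\RR)$ (extended to $\Diff(S^1)$ if needed) that agrees with $x\mapsto\lambda_- x$ on a neighbourhood of every $I_k\subset(-\infty,0)$ and with $x\mapsto\lambda_+ x$ on a neighbourhood of every $I_k\subset(0,\infty)$, interpolating smoothly through the gap $[-\epsilon,\epsilon]$. Such a $\gamma$ exists by a routine partition-of-unity construction. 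I then define
\[
\alpha\big|_{\gC(I_1)\vee\cdots\vee\gC(I_n)} \;\equiv\; \Ad U(\gamma),
\]
where $U$ is the projective unitary representation of $\Diff(\RR)$ implementing the conformal covariance of $\C$.

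To see $\alpha$ is well defined, if $\gamma'$ is another diffeomorphism meeting the same requirements for the collection $\{I_k\}$, then $\gamma^{-1}\gamma'$ acts as the identity on every $I_k$. By the chiral analogue of the locality-of-diffeomorphism-action property \eqref{loc}, $\Ad U(\gamma^{-1}\gamma')$ fixes each $\gC(I_k)$ pointwise, so $\Ad U(\gamma)$ and $\Ad U(\gamma')$ agree on $\gC(I_1)\vee\cdots\vee\gC(I_n)$. An identical argument shows compatibility as the collection is enlarged, so $\alpha$ is a consistent unital $*$-homomorphism on the dense $*$-subalgebra $\bigcup\gC(I_1)\vee\cdots\vee\gC(I_n)$ of $\gC(\RR\setminus\{0\})$. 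Since each $\Ad U(\gamma)$ is an isometric $*$-automorphism of $B(\H)$, the map is bounded and extends to a $C^*$-endomorphism of $\gC(\RR\setminus\{0\})$. The required restrictions $\alpha|_{\gC(-\infty,0)}=\delta_{\lambda_-}$ and $\alpha|_{\gC(0,\infty)}=\delta_{\lambda_+}$ hold by construction, and repeating the construction with $(\lambda_-^{-1},\lambda_+^{-1})$ produces a two-sided inverse.

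The main obstacle is the initial geometric input: producing, for every finite collection of intervals bounded away from $0$, a smooth diffeomorphism of $\RR$ (or $S^1$) that is \emph{exactly} dilation by $\lambda_-$ on the left pieces and \emph{exactly} dilation by $\lambda_+$ on the right pieces, while interpolating smoothly across the hole. This is an elementary partition-of-unity exercise once the intervals are separated from the origin. After that, the entire argument reduces to the diffeomorphism-locality axiom \eqref{loc}, which is precisely the feature of conformal nets that makes the discontinuous, piecewise-dilation assignment well defined at the algebraic level.
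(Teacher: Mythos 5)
Your proof is correct and follows essentially the same route as the paper's: implement the piecewise dilation locally by $\Ad U(\tilde h)$ for a diffeomorphism agreeing with it away from the origin, use the locality axiom \eqref{loc} to show independence of the choice, and extend by continuity. Your version, working with finite collections of intervals on both sides of $0$ simultaneously, is in fact slightly more explicit than the paper's interval-by-interval statement about why the resulting map is multiplicative on the generated $*$-algebra.
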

\begin{proof}
Let $h:\mathbb R\to\mathbb R$ be the function $h(a) = \lambda_- a$ if $a\leq 0$ and $h(a) = \lambda_+ a$ if $a\geq 0$.

With $I\subset \mathbb R$ a bounded interval such that $0\notin \bar I$, choose any diffeomorphism $\tilde h$ of $S^1 = \mathbb R\cup\{\infty\}$ such that $\tilde h$ is equal to $h$ on $I$. With $U$ the unitary representation of $\Diff(S^1)$ associated with $\C$, the map Ad$U(\tilde h)$ is an isomorphism $\a_I$ of $\C(I)$ onto $\C(\tilde hI)= \C(hI)$ that does not depend on the choice of $\tilde h$. We can then define the automorphism $\a$ on $\bigcup_I \C(I)$ ($0\notin\bar I$) by $\a_I |{_{\C(I)}} = \a_I$,  hence $\a$ is defined on $\gC(\mathbb R \setminus\{0\})$ by continuity.
\end{proof}
Denote by $\check M$ the set of spacelike or timelike vectors: $\check M \equiv M\setminus\Lambda_\pm$.
\begin{corollary}\label{corautom}
Let $\D$ be a (possibly non local) conformal net on $M$. Given $\lambda_L, \lambda_R >0$, there exists an automorphism $\a\equiv \a_{\lambda_L,\lambda_R}$ of $\gD(\check M)$ such that 
\[
\a|_{\gD(W_L)} = 
\delta_{\lambda_L}\ , \quad \a|_{\gD(W_R)} = \delta_{\lambda_R} \ .
\]
Here $\delta_\lambda$ is the $\lambda$-dilation automorphism of $\gD(M)$.
\end{corollary}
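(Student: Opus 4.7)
The plan is to carry out a two-dimensional analogue of \lref{autom} by running its construction in the two light-ray directions simultaneously. First, pass to light-ray coordinates $u=t+x$, $v=t-x$, in which $\Lambda_+=\{u=0\}$, $\Lambda_-=\{v=0\}$, the wedges become $W_L=\{u<0,\,v>0\}$ and $W_R=\{u>0,\,v<0\}$, and a double cone $O=I_+\times I_-\in\K$ satisfies $\bar O\subset\check M$ precisely when $0\notin\bar I_+\cup\bar I_-$.

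Next, define piecewise-linear homeomorphisms of $\RR$ by
\[
h_+(u)=\begin{cases}\l_L u,& u\leq 0,\\ \l_R u,& u\geq 0,\end{cases}\qquad
h_-(v)=\begin{cases}\l_R v,& v\leq 0,\\ \l_L v,& v\geq 0,\end{cases}
\]
chosen so that $(u,v)\mapsto(h_+(u),h_-(v))$ restricts to the $\l_L$-dilation on $W_L$ and to the $\l_R$-dilation on $W_R$. For each double cone $O=I_+\times I_-$ with $\bar O\subset\check M$, since $0\notin\bar I_\pm$ one can pick diffeomorphisms $\tilde h_\pm$ of $S^1$ agreeing with $h_\pm$ on an open neighborhood of $\bar I_\pm$, and set
\[
\a_O\equiv\Ad U(\tilde h_+,\tilde h_-)\big|_{\D(O)},
\]
where $U$ is the projective representation of $\Diff(\RR)\times\Diff(\RR)$ implementing diffeomorphism covariance of $\D$ via~\eqref{diff}. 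By~\eqref{diff}, $\a_O$ maps $\D(O)$ onto $\D(h_+I_+\times h_- I_-)$.

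To see $\a_O$ is independent of the extensions, note that any two choices differ by a pair of diffeomorphisms acting as the identity on a neighborhood of $\bar I_\pm$, so by the local-action property~\eqref{loc} the corresponding unitary commutes with $\D(O)$. The same comparison gives $\a_{O'}|_{\D(O)}=\a_O$ for $O\subset O'\subset\check M$, so the family assembles into a $*$-homomorphism on $\bigcup_{\bar O\subset\check M}\D(O)$ and extends by norm-continuity to $\gD(\check M)$; running the construction with $(h_+^{-1},h_-^{-1})$ yields a two-sided inverse, so the extension is an automorphism $\a$. Finally, for $O\subset W_L$ one has $\bar I_+\subset(-\infty,0)$ and $\bar I_-\subset(0,\infty)$, where $h_\pm$ already coincide with the $\l_L$-dilation; choosing $\tilde h_+=\tilde h_-$ to be a lift of the global $\l_L$-dilation of $S^1$ then yields $\a_O=\delta_{\l_L}|_{\D(O)}$, whence $\a|_{\gD(W_L)}=\delta_{\l_L}$ by norm closure, and the case $W_R$ is identical.

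The main (and essentially the only non-trivial) point, exactly as in \lref{autom}, is that the corners of $h_\pm$ at $0$ cause no difficulty because $\check M$ excludes precisely the light rays $\Lambda_\pm$: every double cone in $\check M$ stays uniformly away from $0$ in both chiral variables, which both permits the local smoothings $\tilde h_\pm$ to exist and makes local diffeomorphism covariance~\eqref{loc} sufficient to coherently glue the $\a_O$'s into a single global automorphism.
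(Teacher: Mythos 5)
Your proof is correct and follows the same route the paper intends: the paper's proof of this corollary is literally ``Similar to the one-dimensional case,'' i.e.\ run the construction of \lref{autom} in each light-ray variable, and you have spelled out exactly that, with the right assignment of $\l_L,\l_R$ to the half-lines of $u$ and $v$ and the correct use of \eqref{diff} and \eqref{loc} to glue the locally defined maps. (Your choice of $h_\pm$ also matches the one the paper uses later in the probe construction of Sect.~3.3.)
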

\begin{proof}
Similar to the one-dimensional case.
\end{proof}
\begin{proposition}\label{state}
There exists a natural state $\psi \equiv\psi_{\b_L , \b_R}$ on $\gD(\check M)$ such that $\psi|_{\gB^{L/R}(W_{L/R})}$ is $\f^{L/R}_{\b_L/\b_R}$.
\end{proposition}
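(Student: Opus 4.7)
The plan is to construct $\psi$ by pulling back a canonical reference state on $\gD(\check M)$ through the wedge-wise dilation automorphism supplied by Corollary~\ref{corautom}. First I would invoke the extension of Theorem~\ref{gstate} to non-local nets relatively local with the chiral subnet $\A=\A_+\otimes\A_-$ to produce the unique geometric $2\pi$-KMS state $\varphi^{\D}$ on $\gD$ with respect to time translations. The hypotheses are in place: $\D=\B^L\vee\B^R$ extends $\A$, and since $\B^L$ and $\B^R$ are each local nets containing $\A$, each $\B^{L/R}(O_1)$ commutes with $\A(O_2)$ for spacelike $O_1,O_2$, whence $\D$ is relatively local with $\A$. (The finite-dimensional center of $\D(O)$ is first cut down to $\mathbb{C}$ as indicated in Sect.~\ref{phaseb}.)

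Next, since the time translations on $\gB^{L/R}$ are the restrictions of those on $\gD$, the restriction $\varphi^{\D}|_{\gB^{L/R}}$ is a $2\pi$-KMS state of the completely rational local conformal net $\B^{L/R}$, and the uniqueness clause of Theorem~\ref{gstate} identifies it with the geometric $2\pi$-KMS state $\varphi^{L/R}$. I would then set
\[
\psi \;:=\; \varphi^{\D}\cdot\alpha_{\beta_L/2\pi,\,\beta_R/2\pi}
\]
on $\gD(\check M)$, where $\alpha_{\lambda_L,\lambda_R}$ is the automorphism produced by Corollary~\ref{corautom}. Because the two-dimensional dilation centered at the origin preserves each wedge, $\alpha$ leaves $\gB^{L/R}(W_{L/R})$ invariant and reduces there to $\delta_{\beta_{L/R}/2\pi}$. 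Hence for any $X\in\gB^{L/R}(W_{L/R})$,
\[
\psi(X)\;=\;\varphi^{\D}\bigl(\delta_{\beta_{L/R}/2\pi}(X)\bigr)\;=\;\varphi^{L/R}\bigl(\delta_{\beta_{L/R}/2\pi}(X)\bigr)\;=\;\varphi^{L/R}_{\beta_{L/R}}(X),
\]
where the last equality is the definition $\varphi_\beta=\varphi\cdot\delta_{\beta/2\pi}$ recalled in Sect.~\ref{Sect2}.

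The main obstacle lies entirely in the first step: one has to confirm that the proof of Theorem~\ref{gstate} really does survive for the non-local extension $\D$ after the center reduction of Sect.~\ref{phaseb}, since $\D$ has a priori a richer structure than a single extension of $\A$ by a charge. Once the reference state $\varphi^{\D}$ is in hand, the remainder is a routine computation with the composition law $\delta_\lambda\delta_\mu=\delta_{\lambda\mu}$ of dilations together with the uniqueness of the geometric KMS state for completely rational nets.
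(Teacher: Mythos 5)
Your construction $\psi=\varphi^{\D}\cdot\alpha_{\lambda_L,\lambda_R}$, with $\alpha$ the wedge-wise dilation automorphism of Corollary~\ref{corautom} applied to a reference geometric KMS state on $\gD$, is exactly the paper's proof; the only differences are cosmetic (the paper normalizes the reference state to inverse temperature $1$ and takes $\lambda_{L/R}=\beta_{L/R}^{-1}$, and it does not need the uniqueness clause of Theorem~\ref{gstate}, since the geometric state on $\gD$ is built directly from the diffeomorphism action and so restricts to the geometric states on $\gB^{L/R}$ by construction rather than by uniqueness). The argument is correct and essentially identical to the paper's.
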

\begin{proof} 
The state $\psi$ on $\gD(\check M)$ is given by $\psi \equiv \f\cdot\a_{\lambda_L , \lambda_R}$, where
$\f$ is the geometric state on $\gD$ (at inverse temperature $1$) and
 $\a_{\lambda_L , \lambda_R}$ is the automorphism in Corollary \ref{corautom} with 
$\l_L = \b_L^{-1}$, $\l_R = \b_R^{-1}$. 
\end{proof}
It is possible to extend the state $\psi$ in Prop. \ref{state} to a state on $\gD$ by the Hahn-Banach theorem; for simplicity we still denote by $\psi$ this extended state. Cleary $\psi$ is locally normal on $\check M$, but not around the light rays. As we shall see in Sect. \ref{probe}, we can choose a locally normal state $\psi$ on $M$ by inserting a probe.

\medskip

\noindent
{\bf The large time limit.} Waiting a large time we expect the global system to reach a stationary state, a non equilibrium steady state.

The two nets $\B^L$ and $\B^R$ both contain the same net $\A=\A_+\otimes\A_-$. And the chiral net $\A_\pm$ on $\mathbb R$ contains the Virasoro net with central charge $c_\pm$. In particular $\B^L$ and $\B^R$ share the same stress energy tensor. 

Let $\f^+_{\b_L}$, $\f^-_{\b_R}$ be the geometric KMS states respectively on $\gA_+$ and  $\gA_-$ with inverse temperature $\b_L$ and $\b_R$; we define
\[
\omega \equiv\f^+_{\b_L}\otimes\f^-_{\b_R}\cdot\varepsilon\ ,
\] 
so $\o$ is the state on $\gD$ obtained by extending $\f^+_{\b_L}\otimes\f^-_{\b_R}$ from $\gA$ to $\gD$ by the conditional natural expectation $\varepsilon: \gD\to\gA$.
Clearly $\omega$ is a stationary state, indeed:
\begin{proposition}\label{holom}
$\omega$ is a non-equilibrium steady state on $\gD$ with $\b={\rm min}\{\b_L , \b_R\}$.
\end{proposition}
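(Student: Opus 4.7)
My plan is to verify the two defining properties of a non-equilibrium steady state: stationarity $\omega\circ\tau_t=\omega$, and property~$(a)$ of the KMS condition at $\b=\min\{\b_L,\b_R\}$. Stationarity is the easy part: the natural conditional expectation $\varepsilon\colon\gD\to\gA$ is covariant under the M\"obius representation shared by $\gD$ and $\gA$, hence commutes with $\tau_t$; on $\gA=\gA_+\otimes\gA_-$ one has $\tau_t=\tau^+_t\otimes\tau^-_t$, and each chiral KMS state $\f^\pm_{\b_{L/R}}$ is translation invariant, so $\f^+_{\b_L}\otimes\f^-_{\b_R}$ is $\tau$-invariant, and composing with $\varepsilon$ gives $\omega\circ\tau_t=\omega$.

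For property~$(a)$ the key step is to identify $\omega$ as a chiral dilate of the geometric state on $\gD$ and then invoke the relativistic KMS condition. Let $\f_\gD$ denote the geometric $1$-KMS state of $\gD$, whose restriction to $\gA$ is $\f^+_1\otimes\f^-_1$. From the relation $\f^\pm_\b=\f^\pm_1\cdot\delta^\pm_\b$ (with $\delta^\pm_\lambda$ the chiral $\lambda$-dilation, as in Sect.~2) I set $\alpha:=\delta^+_{\b_L}\otimes\delta^-_{\b_R}$: this is an automorphism of $\gA$, and since it is implemented by unitaries from the shared M\"obius representation, it extends to an automorphism of $\gD$ which commutes with $\varepsilon$ by covariance. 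Consequently $\omega=\f_\gD\cdot\alpha$. Using the commutation $\delta^\pm_\lambda\,\tau^\pm_s\,(\delta^\pm_\lambda)^{-1}=\tau^\pm_{s/\lambda}$, for local $X,Y\in\gD(O)$ I then obtain
\[
 \omega(X\tau_t(Y))=\f_\gD\big(\alpha(X)\cdot\tau^+_{t/\b_L}\tau^-_{t/\b_R}(\alpha(Y))\big).
\]
By the relativistic KMS condition for $\f_\gD$ to be established in Sect.~4, the two-parameter function $(s_+,s_-)\mapsto\f_\gD\big(\alpha(X)\,\tau^+_{s_+}\tau^-_{s_-}\alpha(Y)\big)$ admits a joint holomorphic extension to the bi-strip $\{0<\Im s_+<1,\;0<\Im s_-<1\}$ and a continuous one to the closure of its lower boundary. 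Specialising $s_\pm=t/\b_{L/R}$ produces a holomorphic extension of $\omega(X\tau_t(Y))$ to $\{0<\Im t<\min(\b_L,\b_R)\}=S_\b$, continuous on $\{0\le\Im t<\b\}$; this is exactly property~$(a)$.

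I expect the main difficulty to be this bi-strip analyticity of $\f_\gD$ on the full extension $\gD$: on the chiral subnet $\gA$ it follows immediately from the product form $\f^+_1\otimes\f^-_1$, but for general elements of $\gD$ the conditional expectation $\varepsilon$ does not factor through products of translates, and the joint analyticity has to be obtained by the separate relativistic-KMS argument reserved for Sect.~4.
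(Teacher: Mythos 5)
Your proposal is correct and follows essentially the same route as the paper: write $\omega=\f\cdot\delta$ with $\delta$ the product of chiral dilations by $\b_L^{-1}$ and $\b_R^{-1}$ (extended to $\gD$ via the shared diffeomorphism covariance), reduce $\omega(X\tau_t(Y))$ to $\f\big(X'\tau^+_{t/\b_L}\tau^-_{t/\b_R}(Y')\big)$, and obtain analyticity on $S_{\min\{\b_L,\b_R\}}$ by restricting the joint analyticity in $(u,v)$ supplied by the relativistic KMS condition of Theorem~\ref{relKMS}. Your closing remark correctly identifies that the only substantive input is that bi-strip analyticity on all of $\gD$ (not just on $\gA$), which is exactly what the paper defers to Section~4.
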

\begin{proof}
The states $\f^+_{\b_L}$ and $\f^-_{\b_R}$ are obtained by 
\[
\f^+_{\b_L} = \f^+\cdot \delta_{\l_{L}}\ ,\quad \f^-_{\b_R} = \f^-\cdot \delta_{\l_{R}}\ ,
\]
where $\f_\pm$ is the geometric KMS on $\gA_\pm$ at inverse temperature 1 and  $ \delta_{\l_{L/R}}$ is the dilation automorphism of $\gA_\pm$ with $\l_{L/R} =\b_{L/R}^{-1}$.

Let $\delta$ be the automorphism $\delta_{\lambda_{L}}\otimes \delta_{\lambda_{R}}$ of $\gA=\gA_+\otimes\gA_-$. Then $\omega|_\gA = \f\cdot \delta$, with $\f=\f^+\otimes\f^-$ the geometric KMS state on $\gA$. Since $\D$ contains the diffeomorphism symmetries, this formula extends to $\gD$, namely
\[
\omega \equiv  \f\cdot \delta\ ,
\]
where we denote by the same symbols also on $\gD$ the geometric KMS state $\f$ and the product dilation automorphism
 $\delta=\delta_{\lambda_{L}}\otimes \delta_{\lambda_{R}}$. Then 
\[
\o\cdot\varepsilon =\o \ ,
\] 
with $\varepsilon$ the conditional expectation onto $\gA$\ .

Now, given $X,Y\in\gD$, we have 
\begin{multline*}
F_{X,Y}(t)  \equiv  \omega\big(X\tau_t(Y)\big)
 =  \f\cdot\delta\big(X\tau_t(Y)\big)
 =  \f\Big(\delta\big(X\tau^+_t\cdot\tau^-_t(Y)\big)\Big) \\
 =  \f\big(X'\tau^+_{\lambda_L t}\cdot\tau^-_{\lambda_R t}(Y')\big) 
  =  \f\big(X'\tau^+_u\cdot\tau^-_v(Y')\big) \ ,
\end{multline*}
with $X'=\delta(X)$,  $Y'=\delta(Y)$ and $u = t/\b_L$, $v=t/\b_R$,
so the holomorphic property of $F_{X,Y}(t)$ follows by the joint holomorphic property in the  variables  $(u,v)$ given by the relativistic KMS property, Theorem \ref{relKMS}.
\end{proof}
We now want to show that the evolution $\psi\cdot\tau_t$ of the initial state $\psi$ of the composite system defined in Lemma \ref{state}  approaches the non-equilibrium steady state $\omega$ as $t\to+\infty$.

Note that:
\begin{lemma}\label{V}
$\psi|_{\D(O)} = \omega|_{\D(O)}$ if  $O\in\K(V_+)$.
\end{lemma}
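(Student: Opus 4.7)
The plan is to exhibit both $\psi$ and $\omega$ as compositions $\f\cdot(\cdot)$ with the same geometric KMS state $\f$ on $\gD$ at inverse temperature $1$, by automorphisms that coincide on $\gD(V_+)$. From \pref{state} and the proof of \pref{holom}, $\psi = \f\cdot\a_{\l_L,\l_R}$ on $\gD(\check M)$ and $\omega = \f\cdot\delta$ on $\gD$, where $\delta$ is the automorphism of $\gA = \gA_+\otimes\gA_-$ that dilates $\gA_+$ by $\l_L$ and $\gA_-$ by $\l_R$, extended to $\gD$ through the chiral diffeomorphism covariance. Since $V_+\subset\check M$, the lemma reduces to showing $\a_{\l_L,\l_R}|_{\gD(V_+)} = \delta|_{\gD(V_+)}$.

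In lightcone coordinates $u = t+x$, $v = t-x$, one has $W_L = \{u<0,\,v>0\}$, $W_R = \{u>0,\,v<0\}$ and $V_+ = \{u>0,\,v>0\}$. By the construction in \lref{autom} and \cref{corautom}, $\a_{\l_L,\l_R}$ factorizes as $\a^+\otimes\a^-$ on the chiral subnet, each factor implemented by a chiral diffeomorphism unitary. The wedge prescription $\a_{\l_L,\l_R}|_{\gD(W_{L/R})} = \delta_{\l_{L/R}}$ forces $\a^+$ to act as the $\l_L$-dilation on the portion of $\L_+$ lying in $\overline{W_L}$ (i.e. $v>0$) and as the $\l_R$-dilation on the portion in $\overline{W_R}$ (i.e. $v<0$); symmetrically, $\a^-$ is the $\l_L$-dilation on $u<0\subset\L_-$ and the $\l_R$-dilation on $u>0\subset\L_-$.

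Any double cone $O = I_+\times I_-\subset V_+$ has $I_+\subset\{v>0\}$ and $I_-\subset\{u>0\}$, so on $\gA(O) = \gA_+(I_+)\otimes\gA_-(I_-)$ the chiral factors of $\a_{\l_L,\l_R}$ act as $\delta_{\l_L}$ on $\gA_+(I_+)$ and $\delta_{\l_R}$ on $\gA_-(I_-)$, precisely matching the chiral factors of $\delta$. The implementing diffeomorphisms of $\a_{\l_L,\l_R}$ and of $\delta$ thus coincide on $O$, hence by \eqref{loc} they induce the same automorphism of the larger algebra $\gD(O) = \B^L(O)\vee\B^R(O)$, and precomposition with $\f$ yields $\psi|_{\gD(O)} = \omega|_{\gD(O)}$. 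The only genuinely delicate step is the chirality/wedge bookkeeping: once the identification of $\gA_\pm$ with $\L_\pm$ is fixed, the two automorphisms agree on $V_+$ simply because the wedge prescription selects $\l_L$ on the $\gA_+$-side and $\l_R$ on the $\gA_-$-side of the forward light cone.
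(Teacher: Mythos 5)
Your proof is correct, and it is organized differently from the paper's, although the two arguments rest on the same underlying facts. The paper first reduces from $\gD(O)$ to $\gA(O)$ by observing that both $\psi$ and $\omega$ are invariant under the conditional expectation $\varepsilon:\gD\to\gA$, and then checks that for $O=I\times J\in\K(V_+)$ both states restrict to the same product state $\f^+_{\b_L}\otimes\f^-_{\b_R}$ on $\A_+(I)\otimes\A_-(J)$, using that each chiral algebra is contained in one of the wedge algebras on which $\psi$ is prescribed. You never invoke $\varepsilon$: you compare the two automorphisms $\a_{\l_L,\l_R}$ and $\delta$ directly and show that they agree on all of $\gD(V_+)$, because on a double cone $O\subset V_+$ both are implemented by diffeomorphisms that coincide on $O$, so that \eqref{loc} gives equality on $\D(O)=\B^L(O)\vee\B^R(O)$ and not merely on $\A(O)$; precomposing with the common geometric state $\f$ then finishes the proof. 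What your version buys is that it sidesteps the (only briefly justified) claim $\psi\cdot\varepsilon=\psi$ on $\K(\check M)$; what it costs is that it leans on the explicit diffeomorphism construction of $\a_{\l_L,\l_R}$ in \lref{autom} and \cref{corautom} rather than only on the wedge restrictions stated in those results, so strictly speaking the step from ``the wedge prescription forces the chiral action'' to ``the implementing diffeomorphisms coincide on $O$'' should be read off from that construction rather than from the statement of \cref{corautom} alone. Your chirality bookkeeping ($\l_L$ on the $\gA_+$ factor and $\l_R$ on the $\gA_-$ factor over $V_+$) matches the paper's $\omega=\f^+_{\b_L}\otimes\f^-_{\b_R}\cdot\varepsilon$, so the identification is consistent.
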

\begin{proof} The conditional expectation $\varepsilon: \gD\to\A$ leaves $\omega$ invariant, $\omega(\varepsilon(Z)) = \omega(Z)$ for all $Z \in \mathfrak{D}$, because the geometric state is constructed by local diffeomorphism symmetries that intertwine $\varepsilon$. Similarly, $\psi(\varepsilon(Z)) = \psi(Z)$ if $O\in\K(\check M)$.
So we have to show that $\psi|_{\A(O)} = \omega|_{\A(O)}$ if  $O\in\K(V_+)$. Now the double cone $O=I\times J$ is contained in $V_+$ if and only if $\bar I, \bar J \subset \mathbb R^+$, so in this case both states $\psi$ and $\omega$ are equal to $\f^+_{\b_L}\otimes\f^-_{\b_R}$ on $\A_+(I)\otimes\A_-(J)$ (recall that $\A_+(I) \subset \B^R(W_R)$ and $\A_-(J) \subset \B^L(W_L)$ by definition \eqref{movers}).
\end{proof}
Let $O\in\K(M)$ and $Z\in\D(O)$. The time translated double cone  $O_t \equiv O+ (t,0)$  enters and remains in $V_+$ for $t$ larger than a certain time $t_O$. Therefore, 
we immediately get:
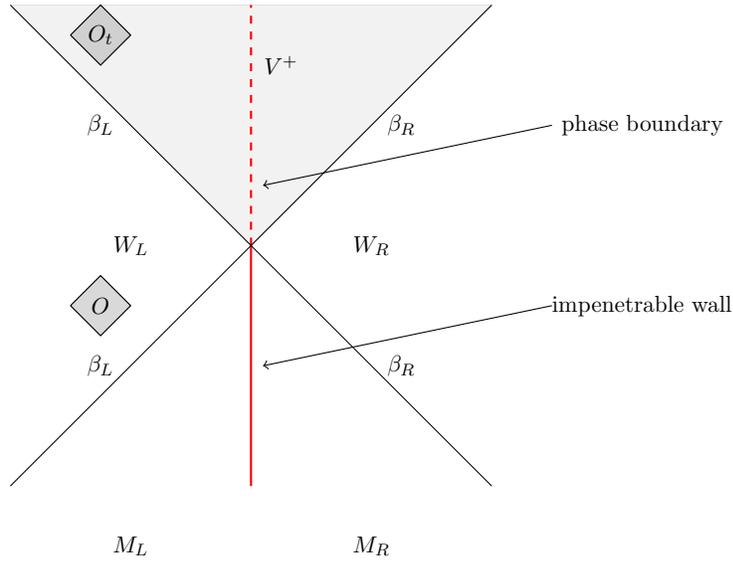
\begin{figure}
\begin{center}
\begin{tikzpicture}[scale=.8, transform shape]
\filldraw[fill=gray,opacity=.1,draw=black] (-4,4) -- (0,0) -- (4,4) -- (-4,4);
\draw[red, thick, dashed] (0,0) -- (0,4);
\draw[red,thick] (0,-4) -- (0,0);
\draw[black] (-4,-4) -- (4,4);
\draw[black] (-4,4) -- (4,-4);
\draw[->,black] (5,2) -- (.2,1);
\draw[->,black] (5,-1) -- (.2,-2);
\filldraw[fill=gray,opacity=.3,draw=black] (-3,-1) -- (-2.5,-.5) -- (-2,-1) -- (-2.5,-1.5) -- (-3,-1);
\filldraw[fill=gray,opacity=.3,draw=black] (-3,3.5) -- (-2.5,4) -- (-2,3.5) -- (-2.5,3) -- (-3,3.5);
\draw[draw=black] (-3,-1) -- (-2.5,-.5) -- (-2,-1) -- (-2.5,-1.5) -- (-3,-1);
\draw[draw=black] (-3,3.5) -- (-2.5,4) -- (-2,3.5) -- (-2.5,3) -- (-3,3.5);
\node at (.5,3) {$V^+$};
\node at (6.5,2) {phase boundary};
\node at (6.5,-1) {impenetrable wall};
\node at (-2.5,-1) {$O$};
\node at (-2.5,3.5) {$O_t$};
\node at (2.5,2) {$\beta_R$};
\node at (-2.5,2) {$\beta_L$};
\node at (2.5,-2) {$\beta_R$};
\node at (-2.5,-2) {$\beta_L$};
\node at (2,-5) {$M_R$};
\node at (-2,-5) {$M_L$};
\node at (2,0) {$W_R$};
\node at (-2,0) {$W_L$};
\end{tikzpicture}
\end{center}
\caption{
\label{DOC2}
Spacetime diagram of simplified setup. There is just one phase boundary and no probe. Every time-translated diamond will eventually 
enter the future lightcone $V^+$. 
}
\end{figure}

\begin{proposition}\label{asynt}
For every $Z\in\gD$ we have:
\[
\lim_{t\to+\infty}\psi\big(\tau_t (Z)\big) =\omega(Z) \ . 
\]
\end{proposition}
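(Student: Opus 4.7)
The plan is to reduce the statement to the case of a local observable $Z \in \gD(O)$ for some double cone $O$, and then use Lemma \ref{V} together with the stationarity of $\omega$.

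First, I would observe that $\omega$ is invariant under the time-translation group $\tau$: this is implicit in Proposition \ref{holom} (where $\omega$ is shown to have the KMS analyticity property $(a)$ for translations, which requires $\tau$-invariance), and it can also be seen directly from $\omega = \varphi \cdot \delta$ with both $\varphi$ and $\delta$ commuting appropriately with $\tau$, since dilations $\delta_{\lambda_{L/R}}$ on the chiral components only rescale the translation parameter. Hence $\omega(\tau_t(Z)) = \omega(Z)$ for all $t \in \mathbb{R}$ and $Z \in \gD$.

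Next, fix $Z \in \D(O)$ for some $O \in \K(M)$. Then $\tau_t(Z) \in \D(O_t)$ where $O_t = O + (t,0)$. Since $O$ is bounded, there is a time $t_O$ such that for $t \geq t_O$ the closure of $O_t$ is contained in the open future light cone $V_+$, i.e.\ $O_t \in \K(V_+)$. By Lemma \ref{V}, the states $\psi$ and $\omega$ agree on $\D(O_t)$ for such $t$, so
\[
\psi(\tau_t(Z)) = \omega(\tau_t(Z)) = \omega(Z), \qquad t \geq t_O.
\]
In particular $\psi(\tau_t(Z)) \to \omega(Z)$ trivially (with equality for $t$ large). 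Note that for these $t$, $O_t \subset V_+ \subset \check M$, so $\tau_t(Z) \in \gD(\check M)$, and hence the argument only uses the values of $\psi$ where it was originally defined in Proposition \ref{state}, independently of the Hahn--Banach extension.

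Finally, I would extend to arbitrary $Z \in \gD$ by a standard $\varepsilon/3$ density argument. By construction, the $^*$-algebra $\bigcup_O \D(O)$ of local observables is norm-dense in $\gD$. Given $Z \in \gD$ and $\varepsilon>0$, pick a local $Z_\varepsilon \in \D(O)$ with $\|Z - Z_\varepsilon\| < \varepsilon$. Since $\psi$ and $\omega$ are states, $\|\psi\cdot\tau_t\| = \|\omega\| = 1$, so
\[
|\psi(\tau_t(Z)) - \omega(Z)| \leq 2\varepsilon + |\psi(\tau_t(Z_\varepsilon)) - \omega(Z_\varepsilon)|,
\]
and the last term vanishes for $t \geq t_O$. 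Letting $\varepsilon \to 0$ yields the desired limit. No real obstacle arises here; the non-trivial content has already been packaged into Lemma \ref{V} and the stationarity of $\omega$, so the remaining argument is essentially a geometric observation (every diamond eventually lands in $V_+$) combined with norm-continuity of the states.
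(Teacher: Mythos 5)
Your proof is correct and follows essentially the same route as the paper's: for local $Z\in\D(O)$ one waits until $O_t$ enters $V_+$, applies Lemma \ref{V} and the stationarity of $\omega$ to get exact equality $\psi(\tau_t(Z))=\omega(Z)$ for $t>t_O$, and then passes to general $Z\in\gD$ by norm density of the local algebras. The extra details you supply (the explicit $\varepsilon/3$ estimate and the remark that only the values of $\psi$ on $\gD(\check M)$ are used) are consistent with, and slightly more explicit than, the paper's argument.
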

\begin{proof}
Let $Z\in\D(O)$ with $O\in\K(M)$. 
If $t>t_O$, we have $\tau_t(Z)\in \gD(V_+)$ as said, so 
\[
\psi\big(\tau_t (Z)\big) =\omega\big(\tau_t(Z)\big) = \omega(Z)\ , \quad t > t_O\ ,
\] 
because of Lemma \ref{V} and the stationarity property of $\omega$. Therefore the limit holds true if $Z$ belongs to the norm dense subalgebra of $\gD$ generated by the $\D(O)$'s, $O\in\K$, hence for all $Z\in\gD$ by the density approximation argument.
\end{proof}
\begin{remark}
The steady state $\o$ is a local thermal equilibrium state in the sense of \cite{BOR} on the Virasoro subnet, as $\o$ is the tensor product of KMS states on $\A$.
\end{remark}

One can readily compute the expectation values in the steady state $\o$ of the flux density operator $T_{tx}(t, x)$ (momentum density component of the stress energy tensor $T_{\mu\nu}$). We can of course write it in terms of the left and right chiral stress energy tensor as
\[
T_{tx}(t,x) = T^+(t-x)-T^-(t+x)\ .
\]
Since the conditional expectation $\varepsilon$ leaves the Virasoro subnet invariant, we thus get for the large-time limit of the flux
\be\label{asyT}
\lim_{t\to+\infty} \psi(T_{tx}(t,x)) =\f^+_{\b_L}(T^+(0)) - \f^-_{\b_R}(T^-(0))
= \frac\pi{12}(c_L \beta_L^{-2} - c_R \beta_R^{-2}) \ ,
\ee
where $c_L$ and $c_R$ are the central charges of the left and right movers. The formula for the expectation value of the chiral stress energy tensor in the geometric KMS state is taken from \cite[Thm. 5.1]{CLTW2}.

\subsection{ Case with chemical potentials and no probe}
In the completely rational case the chemical potential does not appear as the geometric $\b$-KMS state is the unique $\b$-KMS state (Theorem \ref{gstate}).
We indicate here how to generalize Section \ref{main} in the case a chemical potential is present. We recall the following. 
\begin{theorem}{\rm \cite{CLTW2}.}\label{gstate2}
Let $\C$ be the local conformal net on $\mathbb R$ associated with the $U(1)$-current $J$. The translation KMS states $\f_{\b, q}$ on $\gC$ are labeled by the inverse temperature $\b > 0$ and the charge $q\in\mathbb R$. The state $\f_{\b, q}$ is locally normal, $\f_{\b, 0}$ is the geometric KMS state at inverse temperature $\b$ and
\[
\f_{\b, q} = \f_{\b, 0}\cdot \gamma_q\ ,
\]
where $\gamma_q$ is the automorphism of $\C$ corresponding to $J \mapsto J +q$.
\end{theorem}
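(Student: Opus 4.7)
The plan is to classify translation-$\b$-KMS states on the $U(1)$-current net in three steps: first, construct the family $\f_{\b,q}$ as locally normal $\b$-KMS states; second, show that any translation-$\b$-KMS state has one-point function on $J$ of the form $q\int f$, so that composition with $\gamma_{-q}$ reduces the problem to the case of vanishing current expectation; third, argue that there is a unique translation-$\b$-KMS state with vanishing current expectation, namely $\f_{\b,0}$.

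For the first step, realize $\gC$ via Weyl unitaries $W(f)=e^{iJ(f)}$ obeying $W(f)W(g)=e^{-i\sigma(f,g)/2}W(f+g)$ with symplectic form $\sigma(f,g)=\int fg'\,dx$. The prescription $\gamma_q(W(f)):=e^{iq\int f}W(f)$ visibly preserves these relations, so extends to a $^*$-automorphism of $\gC$; since $\int f$ is translation-invariant, $\gamma_q$ commutes with $\tau$, and hence $\f_{\b,q}:=\f_{\b,0}\cdot\gamma_q$ inherits the translation-$\b$-KMS property. Local normality passes through because $\gamma_q$ is inner on each local algebra $\gC(I)$: picking $h_I$ smooth with $h_I'\equiv 1$ on a neighbourhood of the bounded interval $I$, the Weyl relations yield $W(qh_I)W(f)W(qh_I)^*=e^{iq\int f}W(f)$ for all $f$ supported in $I$.

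For the second step, let $\f$ be any translation-$\b$-KMS state. By the split property of the $U(1)$-current net, $\f$ is automatically locally normal (as noted after Theorem~\ref{gstate}), so $f\mapsto\f(J(f))$ is a continuous linear functional on $C_c^\infty(\RR)$. Translation invariance of $\f$ makes this functional translation-invariant, hence a scalar multiple of $\int f$, say $\f(J(f))=q\int f$. Then $\f':=\f\cdot\gamma_{-q}$ is a translation-$\b$-KMS state with $\f'(J(f))=0$ identically.

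The third step is what I expect to be the main obstacle: showing $\f'=\f_{\b,0}$. The idea is to combine the KMS condition applied to $\f'(W(f)\tau_t(W(g)))$ with the Weyl relations and $\f'(J(f))=0$ to deduce that $\f'$ is quasi-free, $\f'(W(f))=e^{-Q(f,f)/2}$ for a bilinear form $Q$, and then to fix $Q$ by imposing the KMS boundary identity on the two-point function $t\mapsto\f'(J(f)\tau_tJ(g))$; this, together with positivity and translation invariance, uniquely forces the standard Bose--Einstein covariance at inverse temperature $\b$. Since the geometric state $\f_{\b,0}$ is itself characterized among quasi-free translation-$\b$-KMS states by zero one-point function on $J$, we conclude $\f'=\f_{\b,0}$ and hence $\f=\f_{\b,q}$. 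The genuinely nontrivial ingredient is the derivation of Gaussianity of $\f'$: one may obtain it either via an Araki--Woods type classification of translation-KMS states on Weyl $C^*$-algebras, or by exploiting the Virasoro covariance inherent in $\C$ together with translation-mixing of extremal KMS states at $\b>0$ to rule out non-Gaussian extremal components of $\f'$.
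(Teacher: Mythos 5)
First, a point of reference: the paper does not prove \tref{gstate2} --- it is imported verbatim from \cite{CLTW2} --- so there is no internal proof to compare your sketch against; what follows assesses the sketch on its own terms against the argument in that reference.

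Your steps 1 and 2 are essentially correct and do match the standard route: $\gamma_q$ preserves the Weyl relations and commutes with translations, is locally implemented by Weyl unitaries $W(qh_I)$ (this is exactly the localization argument the paper itself uses in \lref{Pext}), and the one-point function $f\mapsto\f(J(f))$ of a translation-invariant state is proportional to $\int f$. The only technical point you gloss over in step 2 is why $\f(J(f))$ is finite and continuous in $f$ at all: local normality of $\f$ on the von Neumann algebras $\C(I)''$ does not by itself give finite expectations of the unbounded affiliated operators $J(f)$; one needs the differentiability of $t\mapsto\f(W(tf))$, which in \cite{CLTW2} is extracted from the analyticity supplied by the KMS condition (energy bounds). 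This is repairable.

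Step 3, however, contains a genuine error as stated. It is \emph{false} that there is a unique translation-$\b$-KMS state with vanishing current expectation: for any $q\neq 0$ the mixture $\tfrac12(\f_{\b,q}+\f_{\b,-q})$ is a locally normal translation-$\b$-KMS state with $\f(J(f))=0$ that is not $\f_{\b,0}$. Worse for your fallback, the Gaussian average $\int \f_{\b,q}\, e^{-q^2/2c}\,dq/\sqrt{2\pi c}$ is again \emph{quasi-free} with vanishing one-point function (its two-point function differs from that of $\f_{\b,0}$ by the additive constant $c$, i.e.\ by a $c\,\delta(p)$ term in momentum space, which is compatible with the KMS boundary condition, the central commutator, and positivity). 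So your claim that ``$\f_{\b,0}$ is characterized among quasi-free translation-$\b$-KMS states by zero one-point function'' also fails. The missing hypothesis is extremality (equivalently primarity/clustering), which is how the theorem must be read and is how \cite{CLTW2} states it: general KMS states are barycenters of the $\f_{\b,q}$. Extremality is needed twice --- to kill the zero-momentum $\delta$-contribution in the two-point function via the cluster property, and as an essential input to the Gaussianity argument itself (the Rocca--Sirugue--Testard-type classification you allude to produces ``quasi-free composed with a character'' only for extremal regular KMS states of the Weyl algebra). Since the derivation of Gaussianity is, as you say yourself, the real content of the theorem, and it is precisely the step left as a black box whose applicability is not verified, the proposal should be regarded as an outline with one incorrect reduction rather than a proof.
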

The automorphism $\gamma_q$ is localizable, i.e. equivalent to a DHR sector \cite{H} (the charge $q$ sector, cf. eq. \eqref{charge} below). Since $\C$ is strongly additive, 
$\f_{\b, q}$ satisfies essential duality and $\gamma_p$ is normal in the GNS representation of $\f_{\b, q}$ \cite{L01}, so the chemical potential is defined as in Sect. \ref{ChemPot}.
\begin{lemma}\label{Pext}
The chemical potential of $\f_{\b,q}$ with respect to the charge $p$ is $qp/\pi$.
\end{lemma}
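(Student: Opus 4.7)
\medskip\noindent
\textbf{Proof proposal.}
The plan is to apply formula \eqref{fc} and reduce the calculation to the action of $\gamma_q$ on the Möbius covariance cocycle of the charge $p$ automorphism. By \eqref{fc}, with $\rho=\gamma_p$ and $\f=\f_{\b,q}$,
\[
e^{2\pi\b\mu_p(\f_{\b,q})}=\frac{\mathrm{a.c.}_{t\to i\b}\,\f_{\b,q}(U_p(t))}{\mathrm{a.c.}_{t\to i\b}\,\f_{\b,0}(U_p(t))},
\]
where $U_p$ denotes the restriction to translations of the Möbius covariance cocycle for $\gamma_p$. Using the identity $\f_{\b,q}=\f_{\b,0}\cdot\gamma_q$ from Theorem \ref{gstate2} (and recalling that $\f_{\b,0}$ has zero chemical potential, so the denominator is unchanged by this substitution), the numerator becomes $\mathrm{a.c.}_{t\to i\b}\,\f_{\b,0}(\gamma_q(U_p(t)))$, so the task reduces to computing the scalar $\gamma_q(U_p(t))\,U_p(t)^{-1}\in\CC$ and analytically continuing it in $t$.

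To compute this scalar, I plan to use the explicit Weyl structure of the $U(1)$-current. The charge $p$ DHR automorphism $\gamma_p$ is locally implemented by Weyl operators $W(f)=e^{iJ(f)}$, with $f$ a real smooth test function encoding the transfer of charge $p$ across the localization interval. Consequently, the Möbius covariance cocycle for translations can be written as $U_p(t)=W(f_{p,t})$, with $f_{p,t}$ a compactly supported function built from the difference of such a transfer function and its $t$-translate. Since $\gamma_q$ is the shift $J\mapsto J+q$, it acts on Weyl operators by $\gamma_q(W(f))=e^{iq\int f\,dx}\,W(f)$, yielding
\[
\gamma_q(U_p(t))\,U_p(t)^{-1}=e^{iq\int f_{p,t}\,dx}.
\]

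The final step is an explicit evaluation: from the form of the Möbius covariance cocycle for the $U(1)$-current charge sectors (cf.~\cite{BMT,L97,L01}) and the normalization of $J$ fixed implicitly by Theorem \ref{gstate2}, one obtains $\int f_{p,t}\,dx=-2pt$. Analytically continuing the scalar $e^{-2iqpt}$ to $t=i\b$ then produces $e^{2qp\b}$, so
\[
e^{2\pi\b\mu_p(\f_{\b,q})}=e^{2qp\b},
\]
giving $\mu_p(\f_{\b,q})=qp/\pi$ as claimed.

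The principal obstacle is this last explicit evaluation: identifying the Weyl form of the translation part of the Möbius covariance cocycle in the specific normalization for which $\f_{\b,0}$ is precisely the state of Theorem \ref{gstate2}, so that the numerical factor $1/\pi$ emerges correctly. Once this single ingredient is pinned down, everything else (applying \eqref{fc}, factoring out the $\CC$-scalar $\gamma_q(U_p(t))U_p(t)^{-1}$, and performing the analytic continuation in $t$) is essentially formal, and the bilinearity of the answer in $(p,q)$ serves as a consistency check.
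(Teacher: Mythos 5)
Your proposal follows essentially the same route as the paper: apply \eqref{fc} with $\f_{\b,q}=\f_{\b,0}\cdot\gamma_q$, observe that $\gamma_q$ multiplies the translation covariance cocycle of the charge-$p$ sector by an explicit scalar phase computable from the Weyl relations, and analytically continue that phase to $t=i\b$. The one step you flag as an obstacle is exactly what the paper supplies: with $\gamma_\ell(W(f))=e^{-i\int \ell f\,ds}W(f)$, $\tfrac{1}{2\pi}\int\ell=p$, the cocycle is $U(t)=W(L(\,\cdot\,)-L(\,\cdot\,-t))$ for $L$ a primitive of $\tfrac{1}{\pi}\ell$, whence $\int(L(s)-L(s-t))\,ds=2pt$ and the phase is $e^{-2iqpt}$, confirming your computation.
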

\begin{proof} 
Recall that the $U(1)$ current algebra may be realized as a (weak closure) of the Weyl algebra generated by the unitaries $W(f), f \in C^\infty_0(\mathbb{R}, \mathbb{R})$, subject 
to the Weyl form of the canonical commutation relations $W(f)W(g) = \exp[i \sigma(f,g)] W(f+g), W(f)^* = W(-f)$, where $\sigma(f,g) =  \pi \int_{\mathbb{R}} fg' \ dx$ is the symplectic form. 
Informally, we may think of $W(f)$ as $\exp[-i\int_{\mathbb{R}} J(x) f(x) dx]$, and the Weyl relations are then formally equivalent to the commutation relations of the 
chiral $U(1)$-current, $[J(x), J(y)] = i\pi \delta'(x-y)1$, see~\cite{BMT} for details. Now
choose a smooth real function with compact support $\ell:\mathbb R\to \mathbb R$ with $\frac1{2\pi}\int_{\mathbb{R}} \ell(s)ds = p$. Then
\be\label{charge}
\gamma_\ell\big((W(f)\big)\equiv e^{-i\int \ell f \ ds}W(f)\ ,
\ee
defines a representative $\gamma_\ell$ of the charge $p$ sector that is localized in any interval containing the support of $\ell$. 

A simple calculation using the Weyl relations shows that we may choose the time-translation covariance unitary cocycle for $\gamma_\ell$ as 
$U(t) = W(L( \ \cdot \ ) - L( \ \cdot \ -t))$, where $L$ is a primitive of $\frac1\pi\ell$. Using again the Weyl relations, we find
\[
\gamma_q\big(U(t)\big) = \exp\left\{-iq\int_{\mathbb{R}}(L(s) - L(s-t)) ds \right\}U(t)\ .
\]
Thus we have
\[
\f_{\b,q}\big(U(t)\big) = \f_{\b,0}\big(\gamma_q\big(U(t)\big) =  \exp\left\{-iq\int_{\mathbb{R}}(L(s) - L(s-t)) ds \right\} \f_{\b,0}\big(U(t)\big)\ ,
\]
so by eq. \eqref{fc}, the chemical potential $\mu_p(\f_{\b,q})$ satisfies
\[
e^{2\pi\b\mu_p(\f_{\b,q})} = {\underset{t\, \longrightarrow\,  i\b}{\rm anal.\, cont.\,}}e^{-iq\int(L(s) - L(s-  t))ds}
= {\underset{t\, \longrightarrow\,  i\b}{\rm anal.\, cont.\,}}e^{-i2qpt} = e^{2\b qp}\ ,
\]
therefore $\mu_p(\f_{\b,q}) = qp/\pi$.
\end{proof}
Since the chemical potential is proportional to $p$, it is determined by its value at $p=1$, so we set
\[
\mu(\f_{\b,q}) \equiv \mu_1(\f_{\b,q}) = q/\pi \ .
\]

We define the state $\o_{\b_1,\b_2,q_1,q_2}$ on $\gC(\mathbb R\setminus\{0\})$ by
\[
\o_{\b_1,\b_2,q_1,q_2} = \f_{\b_1,q_1}|_{\gC(-\infty,0)}\otimes\f_{\b_2,q_2}|_{\gC(0,\infty)}\ .
\]
We now get back in the framework of Section \ref{main}, but we suppose here that $\A_\pm$ is the above net $\C$ generated by the $U(1)$-current $J^\pm$ (thus $\B^{L/R}$ is non rational with central charge $c=1$). 

Notice that, in this case, all irreducible sectors of $\A$ are abelian (i.e. represented by automorphisms), and $\A$ is the fixed-point subnet of $\D$ under a compact abelian gauge group of automorphisms of $\D$ \cite{BMT}. So we may extend KMS states from $\gA$ to $\gD$ \cite{AHKT,CLTW2}.

Given $q\in\mathbb R$, the $\b$-KMS state $\f_{\b,q}$ on $\gD$ with charge $q$ is defined by 
\[
\f_{\b,q} = \f_{\b,q}^+ \otimes\f_{\b,q}^- \cdot\varepsilon\ , 
\]
where $\f_{\b,q}^\pm$ denote the state characterized by the previous lemma and theorem on $\A_\pm$. 
$\f_{\b,q}$ satisfies the $\b$-KMS condition on $\gD$ w.r.t. the one-parameter automorphism group
$t\mapsto\t_t\cdot\a_t$, where $\t$ is the time-translation one-parameter automorphism group of $\D$ and $\a$ a one-parameter subgroup of the gauge group of $\gD$.
\begin{proposition}\label{stat}
Given $\b_{L/R}>0$, $q_{L/R}\in\mathbb R$,
there exists a state $\psi$ on $\gD$ such that 
\[
\psi|_{\gB^{L}(W_{L})}= \f_{{\b_{L},q_{L}}}|_{\gB^{L}(W_{L})}\ ,\qquad \psi|_{\gB^{R}(W_{R})}= \f_{{\b_{R},q_{R}}}|_{\gB^{R}(W_{R})}\ .
\]
\end{proposition}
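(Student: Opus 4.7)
The plan is to extend the construction in Proposition~\ref{state} by using the chiral product state $\o_{\b_1,\b_2,q_1,q_2}$ introduced just above the statement, together with the conditional expectation $\varepsilon:\gD\to\gA$ that is available here because $\A$ is the fixed-point subnet of $\D$ under a compact abelian gauge group.

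Concretely, I would first define chiral states on $\gA_\pm(\RR\setminus\{0\})$ by
\[
\omega^+ \equiv \o_{\b_L,\b_R,q_L,q_R}\ , \qquad \omega^- \equiv \o_{\b_R,\b_L,q_R,q_L}\ ,
\]
the L/R interchange in $\omega^-$ reflecting the fact that in light-cone coordinates $u=t+x$, $v=t-x$, the wedge $W_L$ corresponds to $\{u<0\}\times\{v>0\}$ while $W_R$ corresponds to $\{u>0\}\times\{v<0\}$. Then set $\psi \equiv (\omega^+ \otimes \omega^-) \cdot \varepsilon$ on $\gD(\check M)$, and extend to a state on $\gD$ (still denoted $\psi$) by the Hahn--Banach theorem, exactly as in Proposition~\ref{state}.

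To verify the wedge restrictions, take $X\in\gB^L(W_L)$. By uniqueness of the vacuum-preserving conditional expectation, the restriction of $\varepsilon$ to $\gB^L$ coincides with the conditional expectation $\varepsilon^L:\gB^L\to\gA$ used to define $\f^L_{\b_L,q_L}$, and $\varepsilon(X)\in\gA(W_L)=\gA_+(-\infty,0)\otimes\gA_-(0,\infty)$. On this subalgebra, $\omega^+\otimes\omega^-$ agrees by construction with $\f^+_{\b_L,q_L}\otimes\f^-_{\b_L,q_L}$, hence
\[
\psi(X) \,=\, (\f^+_{\b_L,q_L}\otimes\f^-_{\b_L,q_L})(\varepsilon^L(X)) \,=\, \f^L_{\b_L,q_L}(X)\ .
\]
The verification on $W_R$ is analogous, using $\gA(W_R)=\gA_+(0,\infty)\otimes\gA_-(-\infty,0)$ and the corresponding matching with $\f^+_{\b_R,q_R}\otimes\f^-_{\b_R,q_R}$.

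The main subtlety is to justify the compatibility of the conditional expectation $\varepsilon:\gD\to\gA$ with the ones on $\gB^{L/R}\to\gA$; this relies on the uniqueness of vacuum-preserving conditional expectations and on the abelian gauge group structure recalled before the proposition. The product-state formula for $\omega^\pm$ itself is unproblematic since the subalgebras $\gC(-\infty,0)$ and $\gC(0,\infty)$ commute within $\gC(\RR\setminus\{0\})$: for $f,g$ supported in the two half-lines respectively, the symplectic form $\sigma(f,g)=\pi\int fg'\,ds$ vanishes, and the Weyl unitaries commute.
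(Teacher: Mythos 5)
Your construction is exactly the paper's: the paper sets $\psi_0 \equiv \o_{\b_L,\b_R,q_L,q_R}|_{\gC_+(\RR\setminus\{0\})}\otimes\o_{\b_R,\b_L,q_R,q_L}|_{\gC_-(\RR\setminus\{0\})}$, defines $\psi|_{\gD(\check M)}=\psi_0\cdot\varepsilon$, and extends by Hahn--Banach, which is precisely your $(\omega^+\otimes\omega^-)\cdot\varepsilon$ with the same L/R interchange between the two chiralities. Your explicit verification of the wedge restrictions, which the paper leaves implicit, is sound.
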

\begin{proof}
The restriction $\psi_0$ of $\psi$ to $\gA(\check M)=\gC_+(\mathbb R\setminus\{0\})\otimes\gC_-(\mathbb R\setminus\{0\})$ is defined by
\[
\psi_0 \equiv \o_{\b_L,\b_R,q_L,q_R}|_{\gC_+(\mathbb R\setminus\{0\})}\otimes \o_{\b_R,\b_L,q_R,q_L}|_{\gC_-(\mathbb R\setminus\{0\})}\ ;
\]
the restriction of $\psi$ to $\gD(\check M)$ is then defined by $\psi|_{\gD(\check M)} \equiv \psi_0\cdot\varepsilon$.
So any Hahn-Banach extension $\psi$ of $\psi|_{\gD(\check M)}$ to $\gD$ has the properties required by the proposition.
\end{proof}
Now  $\o=\f_{\b_L, q_L}^+\otimes\f_{\b_R, q_R}^-\cdot\varepsilon$ is a steady state (this can be checked by arguments similar to the ones in the proof of Theorem \ref{relKMS}) and $\o$ is evidently determined uniquely by the inverse temperatures $\b_{L/R}$ and the charges $q_{L/R}$
\[
\f^+_{\b_L,q_L}\big(J^+(0)\big) = q_L\ , \qquad \f^-_{\b_R, q_R}\big(J^-(0)\big) = q_R\ .
\]
We also have (see \cite{CLTW2}):
\[
\f_{\b_L,q_L}^+\big(T^+(0)\big) = \frac{\pi}{12\b_L^2} + \frac{q_L^2}2\ , \qquad \f_{\b_R,q_R}^-\big(T^-(0)\big) = \frac{\pi}{12\b_R^2} + \frac{q_R^2}2\ .
\]
In presence of chemical potentials  $ \mu_{L/R} = \frac1{\pi}q_{L/R}$, the large time limit of the two dimensional current density expectation value ($x$-component of 
the current operator $J^\mu$) in the state $\psi$ is, 
with $J^x(t,x) = J^-(t+x) - J^+(t-x)$
\[
\lim_{t\to+\infty} \psi\big(J^x(t,x)\big) =\f_{\b_L,q_L}^-\big(J^-(0)\big) - \f_{\b_R,q_R}^+\big(J^+(0)\big) = -\pi(\mu_L - \mu_R) \ ,
\]
whereas formula \eqref{asyT} becomes here
\[
\lim_{t\to+\infty} \psi\big(T_{tx}(t,x)\big) =\f^+_{\b_L,q_L}\big(T^+(0)\big) - \f^-_{\b_R,q_R}\big(T^-(0)\big)
= \frac\pi{12}\big( \beta_L^{-2} -  \beta_R^{-2}\big) + \frac{\pi^2}2\big(\mu^2_L - \mu^2_R\big) \ ,
\]
which corresponds to formulas obtained in \cite{BD12,BD15} (see also \cite{BD16}).

The above discussion could be extended to the case $\A_\pm$ contains a higher rank current algebra net. Note that, in this case, the net would typically be completely rational.

\subsection{Inserting a probe}\label{probe}

We discuss here how to generalize the discussion in Section \ref{main} when a probe is put between the thermal reservoirs. We do not make a completely rationality assumption here, but we deal with the geometric KMS states (the unique KMS state in the completely rational case). 

With ${\bf a} = (t_1,a)$, ${\bf b} = (t_2, b)$ two points of $M$ with $\bf b$ in the right spacelike complement of $\bf a$ 
(in particular $a < b$), we consider the case of two defect lines, the time axes through the points $\bf a$ and $\bf b$, as described in the introduction~\footnote{
In the introduction, we assumed for simplicity that $t_1=t_2=0$, but the more general situation is more natural and does not cause any additional difficulties.}. 
We assume that we have three local conformal nets $\B^L$, $\B^I$ and $\B^R$ on the same Hilbert space, all containing the same chiral subnet $\A = \A_+\otimes\A_-$. $\B^L$ lives on the left half-plane $M^{a}_L \equiv M_L^{} + {\bf a}$, 
$\B^R$ lives on the left right-plane $M^{b}_R \equiv M_R^{} + {\bf b}$, and $\B^I$ on the intermediate time strip $M^{ab}_I \equiv \{(t,x): a < x < b\}$, yet they can be continued to nets on the entire $M$ as in the previous sections. $\B^L$ and $\B^R$ represents the two reservoirs,  $\B^I$  the probe.

By causality, if the three double cones $O_L , O_I , O_R$ are contained respectively in $M^a_L, M^{ab}_I , M^b_R$ and spacelike separated, then $\B^L(O_L)$, $\B^I(O_I)$, $\B^R(O_R)$ are mutually commuting von Neumann algebras.
\begin{figure}
\begin{center}
\begin{tikzpicture}[scale=.8, transform shape]
\filldraw[fill=gray,opacity=.1,draw=black] (-4.5,9.5) -- (-1,6) -- (2.5,9.5) -- (-4.5,9.5);
\filldraw[fill=blue,opacity=.2,draw=black] (-7,0) -- (5,0) -- (5,6) -- (2,3) -- (-1,6) -- (-3,4)--(-7,8)--(-7,0);
\draw[draw=black] (-1,6)--(-4.5,9.5);
\draw[draw=black] (-1,6)--(2.5,9.5);
\draw[dashed, draw=black] (-7,8) -- (-3,4) -- (-1,6) -- (2,3) -- (5,6);
\draw[dashed, draw=black] (-7,0) -- (-3,4) -- (0,1) -- (2,3) -- (5,0);
\draw[red, thick, dashed] (2,3) -- (2,9.5);
\draw[red, thick, dashed] (-3,4) -- (-3,9.5);
\draw[red,thick] (-3,0) -- (-3,4);
\draw[red,thick] (2,0) -- (2,3);
\draw[->,black, thick] (0,-1) -- (0,9.8);
\draw[->,black, thick] (-8,3.5) -- (7,3.5);
\draw[->,black] (3,8) -- (-2.7,6);
\draw[->,black] (3,7) -- (2.3,6);
\draw[->,black] (3,3) -- (-2.7,1);
\draw[->,black] (3,2) -- (2.3,1);
\node at (-2.5,7.8) {$\beta_L$};
\node at (.5,7.8) {$\beta_R$};
\node at (-.8,7) {$V^+$};
\node at (4.5,7.5) {phase boundaries};
\node at (4.5,2.5) {impenetrable walls};
\node at (6.5,3.8) {$x$};
\node at (.3,9.5) {$t$};
\node at (2.3,3.0) {$\bf b$};
\node at (-3.3,4.0) {$\bf a$};
\node at (4,4) {$W_R^{b}$};
\node at (-6,5) {$W_L^{a}$};
\node at (-1,4.5) {$W_L^{b} \cap W_R^{a}$};
\node at (-1,-1) {$M_I^{ab}$};
\node at (4,-1) {$M_R^{a}$};
\node at (-6,-1) {$M_L^{b}$};
\end{tikzpicture}
\end{center}
\caption{
\label{DOC3}
Spacetime diagram of our setup. The initial state $\psi$ is set up in the shaded region before the system is in causal contact with the phase boundaries. 
In $W^a_L$ resp. $W^b_R$, we have a thermal equilibrium state at inverse temperatures $\beta_L$ resp. $\beta_R$. In the diamond shaped 
shaded region $W_L^b \cap W^a_R$, we have an essentially 
arbitrary probe state. Again, every time-translated causal diamond in this region will eventually enter $V^+$.
}
\end{figure}
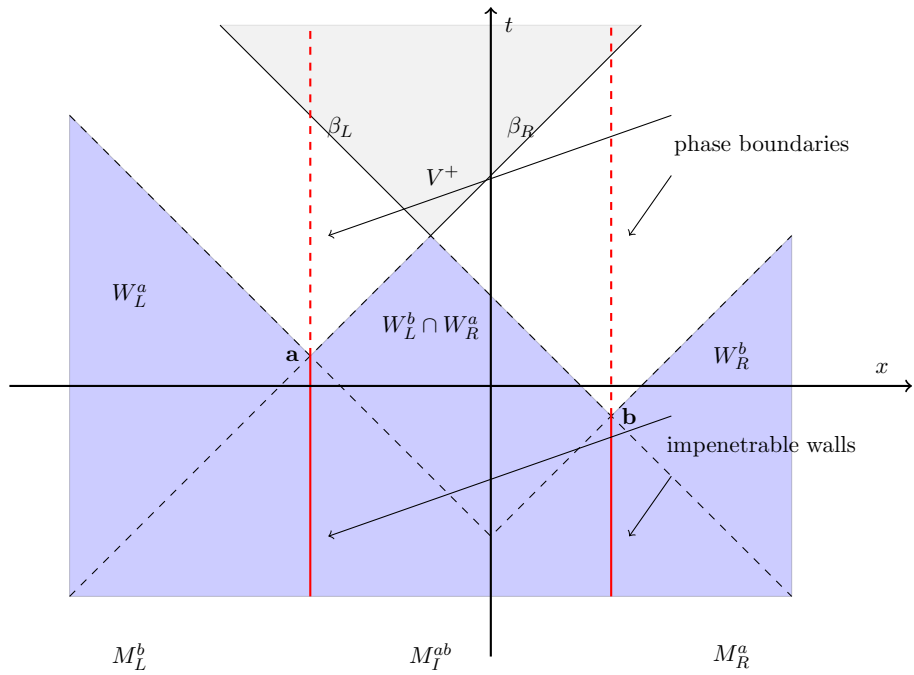
\begin{lemma}
$\B^L$ is left local w.r.t. $\B^I$ and $\B^R$; $\B^I$ is left local w.r.t. $\B^R$.
\end{lemma}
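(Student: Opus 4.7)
The plan is to mirror the two-net argument recorded at the end of Section~\ref{phaseb}, using diffeomorphism covariance to extend the initial geometric causality to arbitrary double cones. The three nets $\B^L,\B^I,\B^R$ share the unitary representation of $\Diff(\RR)\times\Diff(\RR)$ coming from the common chiral subnet $\A=\A_+\otimes\A_-$, and each is extended from its initial region ($M^a_L$, $M^{ab}_I$, $M^b_R$ respectively) to all of $M$ via this covariance. The input is the causality already noted before the lemma: if $O_L\subset M^a_L$, $O_I\subset M^{ab}_I$, $O_R\subset M^b_R$ are pairwise spacelike, then the associated algebras pairwise commute.

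I treat the first statement in detail; the others are analogous. Let $O_1,O_2\in\K$ be spacelike separated with $O_1$ strictly to the left of $O_2$. I would find $g=(g_+,g_-)\in\Diff(\RR)\times\Diff(\RR)$ such that $g^{-1}O_1\subset M^a_L$ and $g^{-1}O_2\subset M^{ab}_I$, with the pair still spacelike separated and correctly ordered. Causality then gives $[\B^L(g^{-1}O_1),\B^I(g^{-1}O_2)]=0$, and conjugating by $U(g)$---which implements covariance for both $\B^L$ and $\B^I$ because $U$ comes from the shared subnet $\A$---yields $[\B^L(O_1),\B^I(O_2)]=0$.

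The construction of $g$ relies on the flexibility of chiral diffeomorphisms. Writing $O_i=I_i\times J_i$ in light-cone coordinates $(u,v)=(t+x,t-x)$, the spacelike-left condition reads $I_1<I_2$ on the $u$-line and $J_2<J_1$ on the $v$-line. By the multi-transitive action of $\Diff_+(S^1)$ on disjoint ordered intervals, I choose $g_+$ so that $g_+^{-1}(I_1)$ sits far in the negative-$u$ region and $g_+^{-1}(I_2)$ sits in a narrow window near some $u_0$, and similarly $g_-$ so that $g_-^{-1}(J_1)$ sits far in the positive-$v$ region and $g_-^{-1}(J_2)$ sits in a narrow window near some $v_0$, with $u_0,v_0$ chosen so that $2a<u_0-v_0<2b$. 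Then $g^{-1}O_1$ has $u-v\ll 2a$ and lies in $M^a_L$, while $g^{-1}O_2$ lies in $M^{ab}_I$ by the window choice; the orientation-preserving property of $g_\pm$ automatically preserves the spacelike-left relation. The remaining cases ($\B^L$ left local w.r.t.\ $\B^R$ and $\B^I$ left local w.r.t.\ $\B^R$) follow by the same recipe, with the target regions $(M^a_L,M^b_R)$ and $(M^{ab}_I,M^b_R)$ respectively. The only mild technical point is checking that $\Diff_+(S^1)\times\Diff_+(S^1)$ is flexible enough to land both preimages in the prescribed half-plane/strip simultaneously, which is standard.
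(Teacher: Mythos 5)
Your argument is correct and rests on the same basic mechanism as the paper's proof -- using the shared representation $U$ of $\Diff(\RR)\times\Diff(\RR)$ to propagate the causality of the initial configuration to arbitrary double cones -- but it implements the reduction differently. The paper keeps $O_1$ fixed (after first reducing to $O_1\in\K(M^a_L)$), chooses $g$ acting \emph{trivially} on $O_1$ and mapping some reference cone $O_0\in\K(M^{ab}_I)$ onto $O_2$, and then uses the locality property \eqref{loc} of the representation (so that $U(g)$ commutes with $\B^L(O_1)$) together with covariance of $\B^I$ alone. You instead move \emph{both} cones simultaneously into the prescribed initial regions by a single $g$ and conjugate the whole commutator, which needs only covariance \eqref{diff} of both nets and not \eqref{loc}. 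The trade-off is that your route requires the slightly more delicate geometric step of landing two preimages in the half-plane and the strip at once (which you justify adequately via the light-cone coordinate description of $M^a_L$ and $M^{ab}_I$ and the order-preservation of $g_\pm$), while the paper's route only has to move one cone; on the other hand your version handles arbitrary positions of $O_1$ directly, whereas the paper's opening reduction to $O_1\in\K(M^a_L)$ is itself left implicit. One cosmetic point: since all intervals involved are bounded, you can and should take $g_\pm$ in $\Diff(\RR)$ (e.g.\ fixing a neighborhood of $\infty$) so that the covariance statement \eqref{diff} applies verbatim; nothing in your construction obstructs this.
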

\begin{proof}
Let us show for example that $\B^L$ is left local w.r.t. $\B^I$. It suffices to show that $\B^L(O_1)$ and $\B^I(O_2)$ commute if $O_1\in\K(M^a_L)$ and $O_2\in\K(M^a_R)$ are spacelike separated. We can choose a diffeomorphism symmetry of $M$ acting trivially on $O_1$
and mapping $O_0$ onto $O_2$, with $O_0\in\K(M^{ab}_I)$. As $\B^L(O_1)$ commutes with $\B^I(O_0)$ and with the covariance unitary $U(g)$ implementing $g$, it also commutes with $U(g)\B^I(O_0)U(g)^* = \B^I(O_2)$.
\end{proof}
We now set
\[
\D(O) = \B^L(O)\vee\B^I(O)\vee\B^R(O) \ .
\]
Let $\f_{\b_{L/R}}^{L/R}$ be the geometric $\b_{L/R}$-KMS state of $\gB^{L/R}$. We shall construct a locally normal state $\psi$ on $\gD$ such that
\be\label{pKMS}
\psi|_{\gB^L(W^a_L)} = \f_{\b_L}^L|_{\gB^L(W^a_L)}, \quad \psi|_{\gB^R(W^b_R)} = \f_{\b_R}^R|_{\gB^R(W^b_R)}, \quad \psi|_{\gB^I(W^a_R \cap W^b_L)} \
{\rm unknown},
\ee
where $W^a_L \equiv W_L^{} + {\bf a}$, $W^b_R \equiv W_R^{} + {\bf b}$ and $W^a_R \cap W^b_L$ is the double cone with vertices $\bf a$ and $\bf b$.

We require
\[
 \f_{\b_L}^L|_\gA = \f^+_{\b_L}\otimes\f^-_{\b_L} , \quad  \f_{\b_R}^R|_\gA = \f^+_{\b_R}\otimes\ \f^-_{\b_R}
\]
with $\f^\pm_{\b_{L/R}}$ the geometric $\b_{L/R}$-KMS state of $\gA^\pm$.
\begin{proposition}
There exists a natural, locally normal state $\psi$ on $\gD$ such that \eqref{pKMS} holds.
\end{proposition}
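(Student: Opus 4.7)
The proof parallels Proposition~\ref{state}. I would first generalize Lemma~\ref{autom} and Corollary~\ref{corautom} to accommodate two defect points: on each chiral light-ray, the points $\mathbf{a}$ and $\mathbf{b}$ trace out two points $u_1^\pm, u_2^\pm$, separating $\mathbb{R}$ into three intervals. Using a continuous piecewise-affine map $h_\pm : \mathbb{R} \to \mathbb{R}$ that is the dilation $u \mapsto \lambda_L(u - u_1^\pm) + u_1^\pm$ on the left outer interval, the identity on the middle interval, and $u \mapsto \lambda_R(u - u_2^\pm) + u_2^\pm$ on the right outer interval (with $\lambda_{L/R} = \beta_{L/R}^{-1}$), I would define an automorphism $\alpha$ of $\gD(\check{M}^{ab})$, where $\check{M}^{ab}$ denotes the complement in $M$ of the four light-rays emanating from $\mathbf{a}$ and $\mathbf{b}$. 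As in Lemma~\ref{autom}, $\alpha$ is defined locally by conjugating with $U(\tilde h)$ for any smoothing $\tilde h \in \Diff(S^1)$ of $h_\pm$ on the relevant bounded interval, and the construction is independent of this smoothing.

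Setting $\psi_0 = \f \cdot \alpha$ on $\gD(\check{M}^{ab})$, with $\f$ the geometric KMS state of $\D$ at inverse temperature $1$, one checks using the chiral translation invariance of $\f$ that on $\gB^L(W^a_L)$ the state reduces to $\f \cdot \delta_{\lambda_L}|_{\gB^L(W^a_L)} = \f^L_{\beta_L}|_{\gB^L(W^a_L)}$, and similarly on $\gB^R(W^b_R)$ (cf.\ Proposition~\ref{holom}); on the probe diamond $W^a_R \cap W^b_L$ the state reduces to $\f$ itself restricted there, which is a perfectly valid choice of ``arbitrary'' probe state.

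The remaining task is to extend $\psi_0$ from $\gD(\check{M}^{ab})$ to a locally normal state $\psi$ on all of $\gD$. Unlike in Proposition~\ref{state}, the probe diamond here is an open bulk region of positive extent contained entirely in $\check{M}^{ab}$, on which $\psi_0$ coincides with the locally normal geometric KMS state $\f$. For any bounded double cone $O$ meeting one of the defect light-rays, I would approximate $O$ from within $\check{M}^{ab}$ by subdiamonds $O_n$, note that each restriction $\psi_0|_{\D(O_n)}$ is normal, and use the split property of the conformal nets $\B^{L/I/R}$ to patch these into a single normal state on $\D(O)$.

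The main obstacle is precisely this extension step: showing that local normality survives at double cones straddling a defect light-ray. The probe's role is exactly to provide an open bulk region adjacent to each defect on which the state is unambiguously locally normal, allowing the split property to combine wedge-side and probe-side contributions into a single normal state on $\D(O)$. This buffer is unavailable in the single-defect setting of Proposition~\ref{state}, which is why only a (generally non-normal) Hahn-Banach extension existed there.
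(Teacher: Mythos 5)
Your first half is on the right track, but the second half misses the one idea that makes this proposition work, and the patching argument you propose in its place would not close the gap. The paper's proof does \emph{not} use a piecewise-affine map with kinks at the defect points followed by an extension step. Instead it exploits the probe interval as room to interpolate: one chooses $h_\pm:\RR\to\RR$ globally \emph{smooth} and strictly increasing, equal to $\l_L s$ to the left of $I_+$ and to $\l_R s$ to the right of $I_+$ (and with the roles of $L,R$ swapped for $h_-$), the smooth transition taking place entirely inside the probe's chiral interval. Then $h_+\times h_-$ is implemented by diffeomorphism covariance on \emph{every} double cone, so it defines a locally normal automorphism $\a_{\l_L,\l_R}$ of all of $\gD$ — there are no excluded light rays — and $\psi=\f\cdot\a_{\l_L,\l_R}$ is locally normal everywhere because $\f$ is. No extension problem ever arises. (The price is that the probe state in \eqref{pKMS} is whatever $\f\cdot\a$ restricts to on $W^a_R\cap W^b_L$, which is why it is declared ``unknown''.) This is precisely what the probe buys: not a region on which to anchor a split-property argument, but room to smooth the reparametrization.

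By contrast, your map has derivative jumps at $u_1^\pm$ and $u_2^\pm$, so your $\a$ is only defined on $\gD(\check M^{ab})$ and your $\psi_0$ has exactly the same pathology at the four defect rays that the paper's single-defect state has at the light rays through the origin — where the paper explicitly concedes the state is \emph{not} locally normal. Your proposed repair does not address this: (i) the subdiamonds $O_n\subset O\cap\check M^{ab}$ generate $\D(O)$ only under strong additivity, which is not assumed in this section; (ii) normality of $\psi_0$ on each member of an increasing family of subalgebras does not imply normality on the algebra they generate; and (iii) the split property produces normal \emph{product} states across an interpolating type I factor, but gives no mechanism for extending the particular state $\psi_0$ normally across a ray where it changes abruptly — indeed the split property is equally available in the single-defect setting of Prop.~\ref{state}, where local normality around the light rays nevertheless fails. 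Concretely, on a chiral interval $I\ni u_1^+$ your state is $\f^+$ composed with a non-smooth reparametrization, and such non-smooth ``diffeomorphisms'' are generically not implementable in (hence not normal for) the local representation. So as written your construction yields a state satisfying \eqref{pKMS} only after a Hahn--Banach extension, losing exactly the local normality the proposition asserts.
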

\begin{proof}
With $O = I_+\times I_-$, $I_\pm\subset \L_\pm$, let $h_\pm:\mathbb R\to\mathbb R$ be smooth, strictly increasing maps such that 
$h_+(s) = \l_{L/R} s$ if $s$ is left/right to $I_+$ and $h_-(s) = \l_{R/L} s$ if $s$ is left/right to $I_-$, where $\l_{L/R}\equiv \b_{L/R}^{-1}$.
Then $h_+\times h_-$ implements a locally normal automorphism $\a_{\l_L,\l_R}$ of $\gD$. With $\f$ the geometric KMS state of $\gD$, the state $\psi$ is given by $\psi = \f\cdot \a_{\l_L,\l_R}$.
\end{proof}
Also in this case we have the large time approach to a steady state:
\begin{proposition}\label{asynt2}
$\lim_{t\to+\infty}\psi\cdot\tau_t =\omega$ (weak$^*$-limit in the dual of $\gD$),
where $\o \equiv \f^+_{\b_R}\otimes\o^-_{\b_L}\cdot\varepsilon$.
\end{proposition}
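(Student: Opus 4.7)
My plan is to mimic the proof of Proposition \ref{asynt}. Take
\[
V_+ \;\equiv\; V_+(\mathbf{a}) \cap V_+(\mathbf{b}),
\]
the intersection of the forward light cones based at the two defect vertices $\mathbf{a}, \mathbf{b}$ (the triangular region at the top of Fig.~\ref{DOC3}). Since $\omega$ is time-translation stationary (checked as in Proposition \ref{holom} via the relativistic KMS property of Theorem \ref{relKMS}) and every $O \in \K(M)$ satisfies $O + (t,0) \subset V_+$ for $t$ exceeding some threshold $t_O$, it suffices to establish the analog of Lemma \ref{V}:
\[
\psi|_{\D(O)} = \omega|_{\D(O)} \quad \text{for every } O \in \K(V_+). \qquad (\ast)
\]
Given $(\ast)$, for $Z \in \D(O)$ and $t > t_O$ one has $\psi(\tau_t(Z)) = \omega(\tau_t(Z)) = \omega(Z)$ by stationarity of $\omega$, and the weak$^*$-limit on all of $\gD$ follows by norm density.

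To prove $(\ast)$ I first reduce to the chiral subnet $\gA \subset \gD$. Both $\psi$ and $\omega$ are invariant under the conditional expectation $\varepsilon : \gD \to \gA$: for $\omega$ this is immediate from its definition; for $\psi = \f \cdot \a_{\l_L, \l_R}$ it follows because (a) the geometric state $\f$ on $\gD$ satisfies $\f \cdot \varepsilon = \f$ (being built from the chiral geometric states via the natural expectation, exactly as in the proof of Lemma \ref{V}), and (b) the automorphism $\a_{\l_L, \l_R}$, implemented by an element of $\Diff(\RR) \times \Diff(\RR)$, preserves $\gA$ and hence commutes with $\varepsilon$ on $\D(O)$. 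It therefore suffices to check $\psi|_{\A(O)} = \omega|_{\A(O)}$ on $\A(O) = \A_+(I_+) \otimes \A_-(I_-)$.

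For $O \subset V_+$ deep enough in the causal future, the chiral intervals $I_\pm$ lie entirely past the bounded transition zone of the diffeomorphisms $h_\pm$ defining $\a_{\l_L, \l_R}$ (this zone is pinned to the light-ray projections of $\mathbf{a}, \mathbf{b}$ onto $\L_\pm$). Thus $h_\pm|_{I_\pm}$ reduce to pure linear dilations with the appropriate asymptotic slopes, so that $\a_{\l_L, \l_R}|_{\A(O)}$ becomes a tensor product of chiral dilations and
\[
\psi|_{\A(O)} \;=\; \big(\f^+ \cdot \delta_{\l_R}\big) \otimes \big(\f^- \cdot \delta_{\l_L}\big) \;=\; \f^+_{\b_R} \otimes \f^-_{\b_L} \;=\; \omega|_{\A(O)},
\]
using $\f|_\gA = \f^+ \otimes \f^-$ and $\f^\pm_{\b_{L/R}} = \f^\pm \cdot \delta_{\l_{L/R}}$ with $\l_{L/R} = \b_{L/R}^{-1}$ (the assignment of $\l_L, \l_R$ to $I_\mp$ is fixed by the geometry of $\L_\pm$ past the defects, exactly as in Lemma \ref{V}).

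The main obstacle is the justification of the previous reduction: one must verify that the smooth interpolating piece of $h_\pm$ is supported in a bounded region depending only on $\mathbf{a}, \mathbf{b}$, so that, for $O$ sufficiently far in the future, one can choose $h_\pm$ to be purely linear on $I_\pm$ and the restriction of $\a_{\l_L, \l_R}$ to $\A(O)$ is unambiguously a chiral product of dilations. Granted this consistency check (which is essentially built into the construction of $\psi$ in the preceding proposition), the remainder of the argument is a verbatim adaptation of Proposition \ref{asynt}.
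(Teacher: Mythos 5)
Your proposal is correct and follows essentially the same route as the paper: the paper's own proof is a two-line argument noting that $O+(t,0)$ eventually enters $V_+ + \mathbf{x}$ (with $\mathbf{x}$ the upper vertex of $W^a_R\cap W^b_L$, which is exactly your $V_+(\mathbf{a})\cap V_+(\mathbf{b})$) and then invoking stationarity of $\omega$ plus norm density. You additionally spell out the probe analogue of Lemma~\ref{V} (reduction via $\varepsilon$ and linearity of $h_\pm$ beyond the transition zone), which the paper leaves implicit; this is consistent with its construction of $\psi$.
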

\begin{proof}
Let $O\in\K$ and $Z\in\D(O)$. For sufficiently large $t$, the double cone $O + (t,0)$ is contained in the cone $V_+ +{\bf x}$, with $\bf x$ the upper vertex of $W^a_R \cap W^b_L$. Thus $\psi(\tau_t(Z))= \o(\tau_t(Z)) = \o(Z)$. As the union of all the $\D(O)$ is norm dense in $\gD$, the result follows by the usual approximation argument.
\end{proof}
The case with a probe and chemical potentials is easily described by combining the above discussions.

\section{The relativistic KMS condition in CFT}

Let $\D$ be a conformal net of factors on the Minkowski plane $M$. Following \cite{BB}, we shall say that a state $\f$ on $\gD$ satisfies the relativistic KMS condition at inverse temperature $\b$ w.r.t. the spacetime translation group $\t$ if there is a timelike unit vector $\bf e$ such that for any given $X,Y\in\gD$ there exists a function  $F_{XY}$ bounded analytic in the tube $\I_\b\equiv\{z\in\mathbb C^2: {\rm Im}\, z \in V_+\cap (\beta{\bf e} + V_-)\}$, continuous on the closure of $\I_\b$, such that
\begin{align}\label{RKMS}
F_{XY}({\bf x}) = \f&\big(X\t_{\bf x}(Y)\big)\ , \\ 
F_{XY}({\bf x} + i\b{\bf e}) &= \f\big(\t_{\bf x}(Y)X)\big)\ , \label{RKMS2} 
\end{align} 
with ${\bf x} = (t,x)$. In the following ${\bf e} = (1,0)$, thus ${\bf e} = (1,1)$ in the chiral $u,v$ variables, and the KMS condition reads
\[
F_{XY}(u,v) = \f\big(X\t_{(u,v)}(Y)\big),\quad F_{XY}(u + i\b, v+ i\b) = \f\big(\t_{(u,v)}(Y)X)\big)\ .
\]
\begin{theorem}\label{relKMS}
Let $\f$ be the geometric $\b$-KMS state on $\gD$ w.r.t. the time translation automorphism groups. If $\D$ is completely rational, then $\f$ satisfies the relativistic $\b$-KMS condition w.r.t. the spacetime translation group $\tau$. 
\end{theorem}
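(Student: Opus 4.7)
The strategy is to pass to chiral light-cone coordinates $u=t+x$, $v=t-x$ and exploit the factorization of the geometric KMS state along the chiral subnet $\A=\A_+\otimes\A_-$. In these variables a spacetime translation by $(t,x)$ acts as $\tau_{(u_0,v_0)}=\tau^+_{u_0}\circ\tau^-_{v_0}$ with $u_0=t+x$, $v_0=t-x$, where $\tau^\pm$ denote the chiral translations on $\A_\pm$ extended to $\D$ by diffeomorphism covariance. A direct computation shows that the relativistic KMS tube $\I_\b$ transforms in $(u,v)$-coordinates into the bi-strip $\widetilde\I_\b=\{(u,v)\in\CC^2:0<\Im u<\b,\ 0<\Im v<\b\}$, so the task reduces to producing, for every $X,Y\in\gD$, a function $F_{XY}(u,v)$ bounded analytic on $\widetilde\I_\b$, continuous on its closure, with boundary values $\f(X\tau_{(u,v)}(Y))$ on $\Im u=\Im v=0$ and $\f(\tau_{(u,v)}(Y)X)$ on $\Im u=\Im v=\b$.

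I would begin with factorized elements of $\gA$. Recalling that $\f=(\f^+\otimes\f^-)\cdot\varepsilon$, with $\varepsilon:\gD\to\gA$ the natural translation-covariant conditional expectation and $\f^\pm$ the geometric $\b$-KMS state on $\gA_\pm$, for $X=X_+X_-$ and $Y=Y_+Y_-$ with $X_\pm,Y_\pm\in\gA_\pm$ local the chiral commutativity and the product form of $\f|_\gA$ give
\[
F_{XY}(u,v)=\f^+\big(X_+\tau^+_u(Y_+)\big)\cdot\f^-\big(X_-\tau^-_v(Y_-)\big).
\]
Each factor extends analytically to its own strip by the usual one-dimensional $\b$-KMS property of $\f^\pm$, uniformly bounded by $\|X_\pm\|\|Y_\pm\|$ through Phragm\'en--Lindel\"of. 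The product is then jointly analytic on $\widetilde\I_\b$, continuous on the closure, and the chiral commutativity combined with the two chiral KMS boundary identities at $\Im u=\b$ and $\Im v=\b$ yields exactly $\f(\tau_{(u,v)}(Y)X)$. Arbitrary $X,Y\in\gA(O)\cong\gA_+(I_+)\otimes\gA_-(I_-)$ are reached by an ultraweak approximation: elementary tensors are dense, and the uniform norm bounds above combined with Montel's theorem allow one to pass to the limit while preserving joint analyticity and boundary values.

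The remaining step is to extend from $\gA$ to $\gD$, and this is where complete rationality of $\D$ plays its essential role. The inclusion $\A(O)\subset\D(O)$ has finite Jones index, so $\D(O)$ is finitely generated as an $\A(O)$-module. Picking a Pimsner--Popa basis $\{m_i\}$ of charged intertwiners from the associated Q-system and writing $X=\sum_iA_im_i$, $Y=\sum_jB_jm_j$ with $A_i,B_j\in\gA$, one uses the translation covariance of the $m_i$ and their intertwining relations with $\gA$ to rewrite $\f(X\tau_{(u,v)}(Y))$ as a finite sum of terms of the form $(\f^+\otimes\f^-)(A'\tau_{(u,v)}(B'))$ with $A',B'\in\gA$. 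Each such term is jointly analytic on $\widetilde\I_\b$ by the previous step, and the full sum inherits both the analyticity and the correct boundary values.

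The main obstacle I anticipate is this last reduction: it is conceptually clear but technically delicate, since one must track how the charged intertwiners conjugate $\gA$-elements and transform under spacetime translations and then verify that the sum over the Q-system reassembles into $\f(\tau_{(u,v)}(Y)X)$ on the upper face of the bi-strip. A conceptually cleaner alternative would be to identify $\f$ with the pullback of the vacuum on $\gD(V_+)$ under the two-dimensional exponential map and to translate the vacuum analyticity in the tube $\RR^2+iV_+$ into analyticity in the chiral-dilation variables; but this route still requires a genuine two-parameter modular input beyond the one-dimensional Bisognano--Wichmann theorem, and does not obviously shorten the argument.
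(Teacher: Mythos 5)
Your overall architecture matches the paper's: chiral coordinates, the bi-strip $S_\b\times S_\b$, the product KMS structure of $\f|_{\gA}=\f^+\otimes\f^-$, and the decomposition $X=\sum_i a_iR_i$ of elements of $\D(O)$ into charged intertwiners over $\A(O)$ (your Pimsner--Popa basis is exactly the paper's family $R_i$ with $R_ia=\r_i(a)R_i$). The treatment of the neutral part on $\gA$ is fine. But there is a genuine gap in the last reduction, and it is not merely ``technically delicate'': your claim that the charged terms can be rewritten as a finite sum of expressions $(\f^+\otimes\f^-)\big(A'\,\tau_{(u,v)}(B')\big)$ with \emph{fixed} $A',B'\in\gA$ is false. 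When you translate a charged intertwiner you pick up the sector's covariance cocycle, $\t_{\bf x}(R_i)=U_i^*({\bf x})R_i$, so after pushing everything into $\gA$ and applying $\varepsilon$ the surviving terms have the form
\[
\f\big(\tau_{\bf x}(a_i)\,U_i^*({\bf x})\,\r_i(b_j)\,C^0_{ij}\big)\ ,
\]
in which the translation parameter enters \emph{both} through $\tau_{\bf x}(a_i)$ and through the operator-valued cocycle $U_i^*({\bf x})$. This ${\bf x}$-dependence cannot be absorbed into a fixed element of $\gA$ (the cocycle identity $U(t+s)=U(t)\tau_t(U(s))$ is incompatible with $U({\bf x})=\tau_{\bf x}(W)$ for a fixed $W$), so the ordinary chiral KMS analyticity of two-point functions $\f^\pm(w_1\tau^\pm_s(w_2))$ does not apply to these terms. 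If your reduction were correct, the theorem would follow from the KMS property alone and would not need any input about the sectors.

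The missing ingredient — and the actual heart of the paper's proof — is the identification of the chiral cocycles with Connes Radon--Nikodym cocycles: after factorizing $U_i(u,v)=U_i^+(u)U_i^-(v)$ (which itself uses $\r_i=\r_i^+\otimes\r_i^-$, available by complete rationality/split property) one has $U^\pm_i(\b s)=d(\r_i^\pm)^{-is}\big(D\f^\pm\!\cdot\Phi^\pm_i:D\f^\pm\big)_{-s}$, and it is Connes' theorem on the holomorphy of such cocycles that makes $s\mapsto\f^\pm\big(\tau^\pm_s(w_1)U_i^\pm(s)^*w_2\big)$ the boundary value of a bounded analytic function on $S_\b$. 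One also needs the finite decomposition $C^0_{ij}=\sum_h t_h^+t_h^-$ of the neutral intertwiners (finite because the Hom-spaces are finite dimensional) to split each charged term into a product of a $+$-chiral and a $-$-chiral function before invoking Hartogs. Without the Connes-cocycle input your argument does not close; with it, your outline becomes essentially the paper's proof.
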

\begin{proof}
Let $\tau^+_u = \t_{(u,0)}$, $\tau^-_v = \t_{(0,v)}$ be the chiral translation automorphism group. Clearly $\tau$ restricts to 
$\gA=\gA_+\otimes\gA_-$ and we have the tensor product decomposition $\t = \t^+\otimes\t^-$ on $\gA$. Now $\f= \f^+\otimes\f^-$ on $\gA$, with $\f^\pm$ a $\b$-KMS state of $\A_\pm$ w.r.t. $\t^\pm$ (see \cite{CLTW1}), namely $\f|_{\gA}$ is a tensor product of KMS states,
so the relativistic KMS condition on $\gA$ is readily verified.

To verify the relativistic KMS condition of $\f$ on $\gD$ we fix a double cone $O= I_+\times I_-$ of $M$. Consider the inclusion of factors $\A(O)\subset\D(O)$, and a dual canonical endomorphism $\vartheta$ localized in $O$. By assumptions, $\vartheta |_{\A(O)}$ is the direct sum (with multiplicity) of irreducible DHR endomorphisms, namely $\vartheta = \bigoplus_i \r_i = \bigoplus_i\r^+_i\otimes\r^-_i$. 

Each element $X\in \D(O)$ is a finite sum 
\be\label{fs}
X = \sum_i a_i R_i\ ,
\ee
with $a_i\in \A(O)$ \cite{L03}. Here $R_i\in\D(O)$ is an isometry such that $R_i a =\r_i(a)R_i$ for every $a\in\gA$; we can choose $R_0 = 1$. We have $R_iR_j = \sum_k C_{ij}^k R_k$ with $C_{ij}^k\in\A(O)$.
Note that $\varepsilon(X) = a_0$, where $\varepsilon:\gD\to\gA$ is the natural conditional expectation,  so $\f(X) = \f(a_0)$ because $\f\cdot\varepsilon =\f$.
We denote by $U_i$ the translation unitary covariance cocycle in $\A$ for the endomorphism $\r_i$, in particular
\[
{\rm Ad}U_i({\bf x})\cdot\t_{\bf x} \cdot \r_i  = \r_i \cdot \t_{\bf x} \ ;
\]
the choice of the $U_i$ is unique if we assume that $U_i$ is the restriction of the M\"obius covariance unitary cocycle (see \cite{L01}). It follows from this relation together with 
$R_i a =\r_i(a)R_i$ for every $a\in\gA$ that 
\be
\t_{\bf x}(R_i) = U^*_i({\bf x})R_i \ . 
\ee
We check eq. \eqref{RKMS} with 
$X = \sum a_iR_i$, $Y=\sum b_j R_j$ with $a_i, b_j \in \A(O)$,  $O = I_+ \times I_- \in \mathcal{K}$. 
For this, it is convenient to consider the function $F({\bf x}) = \f\big(\t_{{\bf x}}(X)Y\big)$ rather than $\f\big(X\t_{{\bf x}}(Y)\big)$ (thus $F({\bf x})  = F_{XY}(-{\bf x})$). We have:
\be\label{long}
\begin{split}
F({\bf x}) &= \f\big(\t_{{\bf x}}(X)Y\big) \\
&= \sum_{i,j} \f\big(\tau_{{\bf x}}(a_i R_i)b_j R_j\big) \\
&= \sum_{i,j} \f\big(\tau_{{\bf x}}(a_i) \tau_{\bf x}(R_i)b_j R_j\big) \\
&= \sum_{i,j} \f\big(\tau_{{\bf x}}(a_i) U^*_i({\bf x})R_i b_j R_j\big) \\
&= \sum_{i,j}  \f\big(\tau_{{\bf x}}(a_i) U^*_i({\bf x})\r_i(b_j)R_i R_j\big) \\
& =\sum_{i,j,k} \f\big(\tau_{\bf x}(a_i) U^*_i({\bf x})\r_i(b_j)C^k_{ij}R_k\big) \\
&= \sum_{i,j} \f\big(\tau_{\bf x}(a_i) U^*_i({\bf x})\r_i(b_j)C^0_{ij}\big) \ .
\end{split}
\ee
Now we have 
\[
U_i({\bf x}) = U_i(u,v) = U^+_i(u)U^-_i(v)\ ,
\]
with $U_i^\pm$ the $\t^\pm$ unitary covariance cocycle in $\A_\pm$ for $\r^\pm_i$. 
Moreover, in the GNS representation of $\gA_\pm$ given by $\f^\pm$, $\t^\pm$ extends to the (rescaled) modular group of the weak closure von Neumann algebra and $U^\pm_i$ is the Connes Radon-Nikodym  cocycle \cite{Con1}:
\be\label{Cco}
U^\pm_i(\b s) = d(\r_i^\pm)^{-is}\big(D\f_\pm\cdot\Phi^\pm_i : D\f^\pm\big)_{-s} \ ,\quad s\in\mathbb R\ ,
\ee
with $\Phi^\pm_i$ the left inverse of $\r^\pm_i$ and $d$ the statistical dimension \cite{L97}. In particular
\[
s\in\mathbb R \mapsto \f^\pm\big(\tau^\pm_s(w_1^\pm)U^\pm_i(s)^*w_2^\pm\big)
\]
is the boundary value of a bounded analytic function on the strip $S_{\b}$, continuous up to the boundary, for every $w_1^\pm, w_2^\pm\in\gA_\pm$ \cite{Con1}.

Next, we know that $C^0_{ij}$ intertwines the identity and $\r_i\r_j$, thus, for each given $i,j$, we have $C^0_{ij} = \sum_h t^+_h t^-_h$ with $t^\pm_h\in\A_\pm(I_\pm)$. This sum is finite. Indeed, as we are dealing with finite index endomorphisms, the intertwiner space Hom$({\rm id}, \r_i\r_j)$ is finite dimensional, and one can check that Hom$({\rm id}, \r_i\r_j)$ is the tensor product of Hom$({\rm id}, \r^+_i\r^+_j)$  and Hom$({\rm id}, \r^-_i\r^-_j)$. We can assume, furthermore, without loss of generality, that $a_i=a_i^+a_i^-$, $b_j= b_j^+ b_j^-$, where $a^\pm_i , b_j^\pm \in \A_\pm(I_\pm)$. (Note that $
t^\pm_h$ depend on $i,j$, but we suppress this to lighten our notation.)
Therefore \eqref{long} gives:
\begin{align*}
F(u,v) &= \f\big(\t_{(u,v)}(X)Y\big) \\
&= \sum_{i,j,h} \f\big(\tau^+_u(a_i^+)\tau^-_v(a_i^-) U^+_i(u)^*U^-_i(v)^*\r^+_i(b^+_j)\r^-_i(b_j^+) t^+_h t^-_h\big) \\
&= \sum_{i,j,h} \f^+\big(\tau^+_u(a_i^+)U^+_i(u)^*\r^+_i(b_j^+)t^+_h\big)  \  \f^-\big(\tau^-_v(a_i^-) U^-_i(v)^*\r^-_i(b^-_j) t^-_h\big)\ ,
\end{align*}
where all sums are finite.
Because of the above mentioned holomorphic properties of the Connes Radon-Nikodym cocycles,  $F(u,v)$ is thus separately holomorphic, hence joint holomorphic by Hartog's theorem, in the product of the strips $S_\b\times S_\b$,
namely $F(-{\bf x})$ extends to a function holomorphic in the tube $\I_\b$, continuous on the closure of $S_\b\times S_\b$. This gives the desired holomorphic property of $\f$ for relativistic KMS condition. Property \eqref{RKMS2} then holds because $\f$ is a KMS state  w.r.t. the time translation one-parameter group.
\end{proof}
\noindent
{\bf Remark.} 
Let $\D\supset\A=\A_+\otimes\A_-$ be a conformal net on $M$, with $\D$ relatively local w.r.t. $\A$.
Under general assumptions, the proof in Theorem \ref{relKMS} of the relativistic $\b$-KMS condition still goes through even without the completely rational assumption, for every extremal $\b$-KMS state of $\gD$ w.r.t. time translation that is invariant under the conditional expectation. The assumptions are the following. The dual canonical endomorphism $\vartheta$ of $\A$ is a direct sum of countably many irreducible sectors $\r_i$ with finite index, the finite sums \eqref{fs} form a dense subalgebra of $\D(O)$
and each $\r_i$ is the tensor product  $\r^+_i\otimes\r^-_i$, with $\r_i^\pm$ a sector of $\A_\pm$ (this tensor product decomposition follows by the split property \cite{KLM}).

\vspace{0.5cm}
\noindent
{\bf Aknowledgements.} S.H. acknowledges discussions with T.-P. Hack and R. Verch.
R.L. would like to thank the Universities of G\"ottingen, Leipzig and Tokyo for the hospitality extended to him during January/March 2016, and Y. Tanimoto for comments.

\end{document}